\newtheorem{lemma}{Lemma}
\newtheorem{corollary}[lemma]{Corollary}
\newenvironment{equation-aligned}
{
\begin{equation}
\begin{aligned}
}
{
\end{aligned}
\end{equation}
}
\DeclareMathOperator{\diag}{diag}
\DeclareMathOperator{\CZ}{CZ}
\DeclareMathOperator{\Cz}{\mathcal{CZ}}
\DeclareMathOperator{\CP}{C\Phi}
\DeclareMathOperator{\Cp}{\mathcal{CZ}_p}
\DeclareMathOperator{\CS}{\mathcal{CS}}
\newcommand{\hhd}{{\hat h^\dagger}}
\newcommand{\hvd}{{\hat v^\dagger}}
\DeclareMathOperator{\1}{\mathds{1}}
\DeclareMathOperator{\NN}{\mathcal{N}}
\newlength{\negph@wd}
\DeclareRobustCommand{\negphantom}[1]{
  \ifmmode
    \mathpalette\negph@math{#1}
  \else
    \negph@do{#1}
  \fi
}
\newcommand{\negph@math}[2]{\negph@do{$\m@th#1#2$}}
\newcommand{\negph@do}[1]{
  \settowidth{\negph@wd}{#1}
  \hspace*{-\negph@wd}
}
\newcommand{\removed}[1]{}
\title{Useful entanglement can be extracted from noisy graph states}
\author{\orcidlink{0000-0001-7676-1605} Konrad Szymański} 
\affiliation{Naturwissenschaftlich-Technische Fakultät, Universität Siegen, Walter-Flex-Straße 3, 57068 Siegen, Germany}
\affiliation{Research Center for Quantum Information, Slovenská Akadémia Vied,
Dúbravská cesta 9, 84511 Bratislava, Slovakia}
\author{\orcidlink{0000-0003-2753-6027} Lina Vandr\'e}  
\affiliation{Naturwissenschaftlich-Technische Fakultät, Universität Siegen, Walter-Flex-Straße 3, 57068 Siegen, Germany}
 \affiliation{Technische Universität Wien, Atominstitut, Vienna Center for Quantum Science and Technology, Stadionallee 2, 1020 Vienna, Austria}
\author{\orcidlink{0000-0002-6033-0867} Otfried Gühne}
\affiliation{Naturwissenschaftlich-Technische Fakultät, Universität Siegen, Walter-Flex-Straße 3, 57068 Siegen, Germany}
\date{28 October, 2025}
\begin{document}

\maketitle
\begin{abstract}
     Cluster states and graph states in general offer a useful model of the stabilizer formalism and a path toward the development of measurement-based quantum computation. Their defining structure -- the stabilizer group -- encodes all possible correlations that can be observed during measurement.  The measurement outcomes which are
     consistent with the stabilizer structure make error correction possible. 
     Here, we leverage both properties to design feasible families of states that can be used as robust building blocks of quantum computation. This procedure reduces the effect of experimentally relevant noise models on the extraction of smaller entangled states from the larger noisy graph state. 
     In particular, we study the extraction of Bell pairs from linearly extended graph states -- this has the immediate consequence for state teleportation across the graph. We show that robust entanglement can be extracted by proper design of the linear graph with only a minimal overhead of the physical qubits. This scenario is relevant to systems in which the entanglement can be created between neighboring sites.
     The results shown in this work 
     provide a mathematical framework for noise reduction in measurement-based quantum computation. With proper connectivity structures, the effect of noise can be minimized for a large class of realistic noise processes.
\end{abstract}

\section{Introduction}

The discovery of cluster states -- states of qubits with grid-like entanglement structure -- provided a new perspective for quantum computation, better applicable for some systems \cite{Raussendorf2001,Nielsen_2006_cluster_QC}. The original design consisted of qubits arranged in a two-dimensional square lattice structure, where the neighboring qubits would be entangled by means of controlled-$Z$ gates. 
Unlike the circuit-based understanding of quantum computation \cite{Mu_oz_Coreas_2022_Qcircuits,Fisher_2023}, the input state is constant, and the only available class of operations is the sequential measurement of individual qubits. It is \emph{the act of measurement} that feeds the information into the system and performs all of the needed transformations \cite{Raussendorf2003measurement}. Generalizations into different connectivity structures -- the graph and hypergraph states -- soon followed, sharing the same core properties with the original cluster states \cite{Rossi2013,hein2006entanglement,Kruszynska_2009,Qu_2013} and providing a framework for further research into the entanglement properties of multiqubit states \cite{Gachechiladze_2019,Vandre2023Entanglement}.
Another task for which graph states are useful is the distribution of entanglement in quantum networks   \cite{hahn2019quantum, Mannalath2023, Meignant_2019, sen2024multipartiteentanglementdistributionquantum, fan2024optimizeddistributionentanglementgraph,  basak2024improvedroutingmultipartyentanglement}.

In the context of quantum computation, the most basic operation is no processing at all: a simple transfer or teleportation  \cite{Bennett1993Teleporting,Bouwmeester_1997_exp_teleportation} of one qubit state into another place. It is a basic building block of measurement-based quantum computation, on top of which more involved operations are built. In the basic scenario, teleportation can be thought of as the creation of a Bell pair, which is subsequently used to transmit the quantum state. The Bell pair creation step can be performed by sequential measurement of Pauli-$X$ operators on a path between the initial and end qubits 
and Pauli-$Z$ on the qubits neighboring the path (see \cref{fig:clustertoBell}). Intuitively, the $Z$-basis measurement excises the entanglement contained in the path from the rest of the cluster state 
 and the $X$-basis measurement along the path leaves the terminal (first and last) qubits in Bell-type correlations, up to local unitary rotations. The final measurement of one of them fuses the two, and the surviving qubit shares correlations with a distant part of the graph (\cite{Raussendorf2003measurement}, see also \cref{app:bellteleportation}).

The crucial point is that the Bell pair creation step is independent of the state to be transferred. The right correlational structure is prepared and subsequently used, and the preparation step can be studied alone.  This abstraction step is helpful in various ways: the theoretical effects of the exact input state for teleportation are not relevant, this formulation allows us to utilize the works dedicated to quantum state extraction \cite{morruiz2023influence,Frantzeskakis23weightedgraphs}, and similar language regarding the correlation of the relevant qubits is frequently used \cite{Wagner_2023,Miller_2023}, also in the research dedicated to teleportation \cite{Morley-Short_2019_crazygraphs}.

 \begin{figure}
    \centering
    \includegraphics[width=\linewidth]{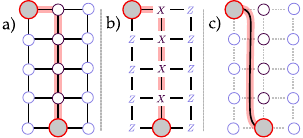}
    
    \caption{
    Any two vertices of a graph state associated with a connected graph can be turned into a Bell pair by sequential measurements. 
    a) A grid graph describing the underlying structure of a cluster state. The goal is to generate a Bell pair between the qubits associated with the big red vertices. The chosen path is highlighted in red.
    b) We measure the qubits associated with the vertices on the path between the red vertices in the $X$ basis and the qubits adjacent to the chosen path in the $Z$ basis. 
    c) The post-measurement state has Bell correlations between the qubits associated with the red vertices. This procedure 
    can be formalized to the so-called X-protocol \cite{hahn2019quantum}.
	}
    \label{fig:clustertoBell}
\end{figure}

In general, entanglement is required for quantum computation, and preparation of a Bell pair or GHZ state is an operational primitive of many quantum algorithms. Extracting those states from larger graph states was discussed in \cite{dejong2023extracting,MorRuiz_2023_noisy_weighted,Morley-Short_2019_crazygraphs,Mannalath2023,hahn2019quantum,freund2025graphstateextractiontwodimensional}. The problem of extracting graph states from larger graph states is known to be NP-complete \cite{dahlberg2018transformgraphstatesusing, Dahlberg2020transforminggraph}.

  In this article, we analyze experimentally motivated noise models and determine what can be done to the initial noisy graph state structure and the measurement pattern in order to mitigate the noise effects. This general problem of noise reduction in graph states has been analyzed from various viewpoints. For instance, purification and distillation protocols are discussed in \cite{Aschauer_2005, Kruszynska_2006, sajjad2024lowerboundsbipartiteentanglement}. Quantum error correcting codes are more general, but complex. Unlike them, our approach utilizes only local single-qubit measurements, basic preparation and verification, and is, in principle, adaptable to various underlying graph structures. However, it still uses the properties of the \emph{stabilizer} associated with the graph state. In the ideal case, some outcomes are perfectly (anti)correlated, and any deviation indicates the presence of noise. As mentioned above, teleportation is equivalent to a Bell pair preparation; in order to simplify the calculations, we analyze the entanglement quality of the prepared state after the internal qubits have been measured.

The article is structured as follows. In \cref{sec:prelim}, we briefly introduce graph states and the stabilizer formalism. In \cref{sec:extraction}, we explain the measurement scheme that we use to generate Bell pairs between certain nodes or for teleportation. We introduce noise models in \cref{sec:noise} and analyze how to deal with their effects in \cref{sec:noise_extract}. In \cref{sec:GHZ}, we show how to extend the presented methods to generate GHZ states. 
We conclude our work in \cref{sec:sum}. A portion of this article contains numerical results; the code \cite{zenodo} used in data generation and analysis is publicly available.

\section{Graph state preliminaries} \label{sec:prelim}

\begin{figure}[ht!]
    \centering
    \includegraphics[width=\linewidth]{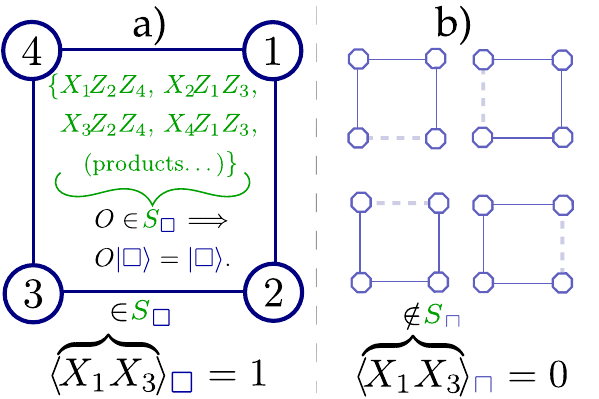}
    \caption{a) The square graph defines the associated graph state $\ket{\square}$ which, as any graph state, possesses an associated structure of the \emph{stabilizer} $S_{\square}$ composed of specific strings of Pauli operators. Expectation value of any Pauli string $O$ is nonzero if and only if one of $\pm O$ belongs to $S_{\square}$. $X_1 X_3$ is a stabilizer operator of $\ket{\square}$ and therefore it's expectation value is 1.
    \\b) Graph states corresponding to the state defined in a) being altered by noise -- here, the noise may remove certain edges (dashed lines), as in the exemplary $\ket{\sqcap}$ defined by the graph shown in the top left corner. Noise processes change the stabilizer group (e.g., by graph modification shown here), which allows for verifiable preparation.
    }
    \label{fig:fourqubit}
\end{figure}

In this section, we present a short introduction to the class of graph states and the stabilizer formalism.

\subsection{Simple graph states}

A graph $G = (V,E)$ is defined by a set $V$ of vertices and a set of edges $E \subset V^2$: each $\{i,j\}=:e \in E$ denotes a connection between two vertices $i, j \in V$, where $i \neq j$. 
Graph states are multi-qubit quantum states, where the vertices and edges of the graph $G = (V,E)$ represent qubits and entangling gates, respectively. The state $\ket{G}$, corresponding to a graph $G$, is defined as
\begin{align}
    \ket{G} =\negphantom{ii}\prod_{\{i,j\} \in E}\negphantom{ii} \CZ_{i,j} \ket{+}^{\otimes \lvert V \rvert }, 
    \label{eq:grstate}
\end{align}
where $\CZ$ denotes a standard controlled-$Z$ unitary operation on qubits $i$ and $j$
\begin{align}
    \CZ_{i,j} = \!\mathbbm{1}_{V\setminus\{i,j\}}\!\!\otimes\!\!\Big(&\ket{00}\!\!\bra{00}\!+\!\ket{10}\!\!\bra{10} \notag \\
    &\!\!+\!\ket{01}\!\!\bra{01}\!-\!\ket{11}\!\!\bra{11}\Big)_{i,j}.
\label{eq:czaction}
\end{align}
The controlled-$Z$ gates are diagonal in the computation basis, and hence commute. Therefore, the product in the above equation does not depend on the order of operations.

Consider a four-vertex square graph shown in \cref{fig:fourqubit}a: it has a vertex set $V^
\square = \lbrace 1,2,3,4 \rbrace$ and an edge set $E^\square = \lbrace \lbrace 1,2 \rbrace, \lbrace 2,3 \rbrace, \lbrace 3,4 \rbrace, \lbrace 4,1 \rbrace \rbrace$. The corresponding graph state is given by 
\begin{align}
    \ket{\square} = \CZ_{1,2} \CZ_{2,3} \CZ_{3,4} \CZ_{4,1} \ket{+}^{\otimes 4}.
\end{align}

The graph states can also be described using the so-called stabilizer operators, providing a description of the state in question in terms of its correlation structure. Consider an initial state $\ket{+}^{\otimes\vert V\vert}$: since it is a pure product state, there are no correlations between different sites. However, it is a $(+1)$-eigenstate for each $X_i$ for $i\in V$. Let us denote the product of controlled-$Z$ operators in  \cref{eq:grstate} by $U$; then
\begin{align}
    \ket{G}&=U \ket{+}^{\otimes \lvert V \rvert } =U X_i \ket{+}^{\otimes \lvert V \rvert } \\
   &= \underbrace{U X_i U^\dagger}_{g_i} \underbrace{U \ket{+}^{\otimes \lvert V \rvert }}_{\ket{G}}=g_i \ket{G}.
   \label{eq:stabgenorigin}   
\end{align}
The operator $g_i$  can be shown to have the form of
\begin{align}
g_i = X_i \negphantom{ii}\prod_{j \in \NN(i)}\negphantom{ii}Z_j,
\label{eq:generator}
\end{align}
where $X_i$, $Y_i$, and $Z_i$ denote the Pauli matrices acting on the $i$-th qubit and $\NN(i)$ is the neighborhood of the vertex $i$, given by
\begin{align}
    \NN(i) = \{ j : \{i,j\} \in E \}.
\end{align}
As shown above, any graph state $\ket{G}$ defined by \cref{eq:grstate} is an eigenstate with eigenvalue $+1$ of all $g_i$:
\begin{align}
    g_i \ket{G} = \ket{G}.
    \label{eq:stabgendef}
\end{align}
While the $(+1)$-eigenspace of each of  the operators $g_i$ is highly degenerated, taken together they fully define the state $\ket{G}$. The graph state $\ket{G}$ is the unique $(+1)$-eigenstate of all $g_i$.  Operators defined by \cref{eq:stabgendef} commute; hence, they generate an Abelian subgroup of the group of Pauli operators, called the \emph{stabilizer} $S$ of $\ket{G}$:
\begin{equation}
S = \left\{\prod_{i\in I} g_i: I \subset V\right\}.
\end{equation}
The elements of the group $S$, due to the compact form of the generators (\cref{eq:generator}), can be described with the help of low-dimensional matrices of integers modulo $2$ \cite{Wu_2016_Xchains}. This representation is helpful in direct calculations involving the stabilizer group.

The exemplary graph state shown in \cref{fig:fourqubit}a has stabilizer operators $g_1 = X_1 Z_2 \1_3 Z_4$, $g_2 = Z_1 X_2 Z_3 \1_4$, $g_3 = \1_1 Z_2 X_3 Z_4$, and $g_4 = Z_1 \1_2 Z_3 X_4$, as well as all products composed from these operators.

The elements of the stabilizer $S$ form a complete description of all possible correlations of local measurements of graph states.  Let $O=o_{i_1}o_{i_2}\ldots$ be a Pauli string, where $o_i$ denotes one of the Pauli operators $\{X_i, Y_i, Z_i\}$ acting on the $i$-th qubit. If $\pm O$ appears in $S$, its expectation value on $\ket{G}$ is equal to $\pm 1$. Otherwise, if $\pm O\notin S$, we have $\langle O\rangle_{G}=0$.

For example, for the graph state shown in \cref{fig:fourqubit}a, the string $O = X_1  X_3 = g_1 g_3$ is a stabilizer operator of $\ket{\square}$. Using \cref{eq:stabgendef} it can be seen that $\langle O\rangle_{\square}=1$. The string $O' = X_1 Z_3 \notin S_\square$, hence $\langle O' \rangle_{\square}=0$. In this paper, we are mostly using Pauli strings which contain only Pauli $X$ operators \cite{Wu_2016_Xchains}. However, the results can be generalized to other Pauli strings.

\subsection{Weighted graph states}

\emph{Weighted graph states} \cite{Hartmann_2007} are a related class of quantum states, for which controlled-phase operators are used instead of controlled-$Z$. 
The phases $\varphi_{i,j}$ are possibly different for each edge $\{i,j\} \in E$  of the graph $G_\varphi = (V,E, (\varphi_{i,j}))$.
A modified version of \cref{eq:grstate} which describes the weighted graph state is

\begin{align}
    \ket{G_\varphi} = \negphantom{ii}\prod_{\{i,j\}\in E}\negphantom{ii} \CP_{i,j}(\varphi_{i,j}) \ket{+}^{\otimes \lvert V \rvert }, 
    \label{eq:weightgrstate}
\end{align}
where $\CP(\varphi)=\diag(1,1,1,\exp(i\varphi))$ on the $\{i,j\}$ subsystem, similarly to \cref{eq:czaction}. Such states arise naturally in systems with Ising-like interactions, and offer a useful representation of noisy graph state preparation using such a scheme. Unfortunately, the analogue of the procedure described by \cref{eq:stabgenorigin} does not produce a Pauli string, and hence a simple stabilizer description of correlations generally does not exist. For a standard graph state $\ket{G}$, all phases $\varphi_{i,j}$ above are equal to $\pi$.

\subsection{Prominent examples}

Many known families of states with structured entanglement can be represented as graph states. For instance,  the  $N$-qubit \emph{Greenberger–Horne–Zeilinger (GHZ) state} \cite{greenberger1989going}, defined as
\begin{align}
    \ket{GHZ} = \frac{1}{\sqrt{2}} ( \ket{0_1 0_2 \dots 0_N} + \ket{1_1 1_2 \dots 1_N} ),
\end{align}
is local unitary equivalent to the graph state of the fully connected $ N$-vertex graph or, equivalently, to the $ N$-vertex star graph.
In measurement-based quantum computing, the \emph{cluster states}, corresponding to grid graphs, are relevant \cite{Raussendorf2003measurement} and are an example of a broader category of universal resource states \cite{Van_den_Nest2006}. Intermediate-size cluster states have recently been realized on various platforms \cite{cao2023generation,Thomas2022}.
A detailed discussion of graph states, including weighted graph states, can be found in \cite{hein2006entanglement,vandenNest04graphicalClifforts}.

\section{Correlations and measurement} \label{sec:extraction}

In this section we show how local measurement affects correlations present in graph states. In particular, it is possible to extract an entangled pair of qubits $(i,j)$ from a graph state $\ket{G}$ by means of local Pauli measurements, if there exists a path between $i$ and $j$ through the graph $G$  \cite{hahn2019quantum,Raussendorf2003measurement}. As we later show in \cref {sec:noise_extract}, the basic procedure outlined in \cref{fig:fiveqbpath} is susceptible to noise effects, and the general mathematical language introduced here is used in the analysis and proposal of alternative extraction procedures.

The post-measurement state after applying a sequence of Pauli measurements can be determined by applying graphical rules for measurements \cite{hein2006entanglement}. In this section, we present
an equivalent 
method using stabilizer operators to find the post-measurement states. This method is better suited for calculations.
In general, there are multiple choices of measurement sequences. If noise is present, the entanglement quality of the resulting two-qubit state depends on the choice of the sequence.

Consider a graph state $\ket{G}$ that undergoes a local sequential measurement process involving Pauli operators.  
The state of the unmeasured qubits is completely characterized by the measurement pattern, that is, the choice of sequential Pauli measurements, along with the outcomes. The remaining correlations stem from the stabilizer operators of $\ket{G}$ consistent with the measurement pattern.

To see the structure of these correlations, let us choose a measurement pattern on a subset of qubits $I$: for each qubit $i\in I$, a local Pauli $o_i\in \lbrace X_i,Y_i,Z_i \rbrace$ is chosen. Then, each qubit $i$ is measured sequentially in the eigenbasis of the associated operator $o_i$. As mentioned previously, the expectation value of a Pauli string $O=\prod_{i\in I} o_i$ can be determined: if $O\in S_G$, it implies $\expval{O}_G = 1$. 
The case of $-O\in S_G$ differs only by the sign, and if neither $\pm O$ belongs to $S_G$, the expectation value is 0.

Without loss of generality, 
let us concentrate on the positive sign case and consider a preparation of $\ket{G}$ and subsequent measurement according to $\{o_i\}_{i\in V}$: the measurement of $o_i$ on the $i$-th qubit has an outcome of $s_i = \pm 1$. If $\langle O\rangle_G = 1$, the outcomes of the sequential measurement must reflect that: the signs of the measurement results must multiply to $+1$. Any other outcome pattern would decrease the magnitude of the expectation value (defined empirically as average over a random sequence of results) and is therefore 
inconsistent with the graph state correlations.
The condition  $ \prod_{i \in I} s_i = 1$ on the outcomes implies that not every combination of measurement outcomes $\{ s_i \}_{i \in I}$ can be measured\footnote{When saying that measurement outcomes are `consistent' with the stabilizer structure, we mean that this condition is fulfilled. That is the product of the outcomes multiply to the expected value.}. Note that for $\expval{O}_G = 0$, all combinations of outcomes $s_i$ are possible, so that both options $ \prod_{i \in I} s_i = \pm 1$ must appear.

This reasoning applies as well if the sequential measurement is stopped at any point and resumed afterward. If $O=\prod_{i\in I} o_i$ is a stabilizer operator, and only the qubits $I'\subset I$ were measured, the 
correlations of outcomes of further measurement in  $I\setminus I'$ can be predicted. Thus, by the action of partial measurement, correlation in the remaining qubits is induced. 

The following Lemma captures this line of thought: the end correlations are defined by stabilizer operators consistent with the measurement scheme. Note that the measurement scheme does not itself have to be a stabilizer operator: only part of it has to extend to one, and such an extension always exists (see \cref{lem:postfuldef} in \cref{app:postmeascorr}).

\begin{lemma}
Let $\ket{G}$ be a graph state determined by the graph $G=(V,E)$. We denote the stabilizer of $\ket{G}$ by $S$. If a subset $I$ of qubits is measured in such a way that a qubit $i\in I$ is measured in the eigenbasis of $o_i \in \{X_i, Y_i, Z_i\}$, we encode this measurement scheme as the Pauli string $O=\prod_{i\in I} o_i$. 
The qubits of $I$ are measured independently and sequentially, with the measurement outcome of $o_i$ denoted by $s_i=\pm 1$.

Let us now write a stabilizer operator $Q\in S$ of $\ket{G}$ as a Pauli string with support on indices $J$: $Q=\pm \prod_{j\in J} q_j$, where $q_j$ is a Pauli operator in $\{X_j, Y_j, Z_j\}$, acting on qubit $j\in J$. 
The post-measurement state $\ket{\psi}$  is determined by 
those elements of the stabilizer, which are
consistent with the measurement pattern $O$, that is, 
$Q \vert_{I \cap J} = O\vert_{I \cap J}$, and relevant measurement outcomes $\{s_i\}.$ 
The stabilizer operators of the post-measurement state $\ket{\psi}$ can be computed from such operators $Q$ as
\begin{equation}
Q':=\pm \prod_{i\in I\cap J} s_i \prod_{j\in J\setminus I} q_j,
\label{eq:postmeascorrstab}
\end{equation}
where the global sign $\pm$ in front of the product is equal to the one appearing in the definition of $Q$.
\label{lem:lem1}
\end{lemma}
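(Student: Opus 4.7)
The plan is to fix an arbitrary stabilizer element $Q\in S$ whose restriction to the measurement region is consistent with the measurement pattern, show that $Q$ commutes with every spectral projector of the measurement, push it through onto the post-measurement state, and finally \emph{evaluate} its action on the measured qubits by using the outcome eigenvalues. The claim that $Q'$ stabilizes $\ket{\psi}$ then falls out directly.

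More concretely, first I would establish local commutation. For $Q$ with support $J$ satisfying $Q|_{I\cap J}=O|_{I\cap J}$, the factor of $Q$ on each measured qubit $i\in I$ is either $o_i$ (when $i\in I\cap J$) or the identity (when $i\in I\setminus J$); both commute with the single-qubit measurement operator $o_i$. Since Pauli operators on distinct qubits commute, $Q$ commutes with every $o_i$ for $i\in I$, and hence with the spectral projectors $P_i^{s_i}=(\1+s_i o_i)/2$. Second, because these projectors act on disjoint qubits they mutually commute, so the unnormalized post-measurement state is
\begin{equation*}
\ket{\psi}\propto\Bigl(\prod_{i\in I}P_i^{s_i}\Bigr)\ket{G},
\end{equation*}
independent of the sequential order in which the measurements are performed. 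Combining this with $Q\ket{G}=\pm\ket{G}$ and the commutation from the previous step immediately yields $Q\ket{\psi}=\pm\ket{\psi}$. Third, I would split
\begin{equation*}
Q = \pm\Bigl(\prod_{i\in I\cap J}o_i\Bigr)\Bigl(\prod_{j\in J\setminus I}q_j\Bigr),
\end{equation*}
and use the fact that on $\ket{\psi}$ each measured qubit $i\in I\cap J$ sits in the $s_i$-eigenstate of $o_i$. The first factor then acts as the scalar $\prod_{i\in I\cap J}s_i$, and what remains acting nontrivially on the unmeasured qubits is precisely $Q'$ as displayed in \eqref{eq:postmeascorrstab}, so $Q'\ket{\psi}=\ket{\psi}$ on $V\setminus I$.

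The individual steps are mechanical; the main subtlety I anticipate is the implicit \emph{completeness} issue — that the operators $Q'$ obtained in this way generate the full stabilizer group of the post-measurement state on $V\setminus I$, not merely a subgroup of it. I would address this by a rank count: the consistent elements of $S$ form a coset of the subgroup of stabilizers supported entirely outside $I$, and quotienting by operators that act trivially on $V\setminus I$ produces an abelian subgroup of Pauli operators of the correct rank $|V|-|I|$, which uniquely determines a pure state on that many qubits. I would also briefly flag the edge case in which no consistent $Q$ exists for a given combination of outcomes: by the discussion preceding the lemma this corresponds to outcomes forbidden by the graph-state correlations and hence does not occur, so there is nothing to check.
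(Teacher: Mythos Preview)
Your proof is correct and follows essentially the same route as the paper: write the post-measurement state using the rank-one projectors $(\1+s_i o_i)/2$, commute $Q$ through the projectors using the consistency hypothesis, and then read off the scalar $\prod_{i\in I\cap J}s_i$ from the eigenvalue relation on the measured qubits. The paper's version is slightly terser---it only writes the projectors on $I\cap J$ and uses the identity $\tfrac{\1+sq}{2}\,q=s\,\tfrac{\1+sq}{2}$ directly rather than first arguing $Q\ket{\psi}=\pm\ket{\psi}$ and then splitting---but the mechanics are the same. Your additional completeness discussion (the rank count yielding $|V|-|I|$ independent generators) goes beyond what the paper proves here; the paper handles that separately as \cref{lem:postfuldef} in the appendix, via the binary representation of the stabilizer.
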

\begin{proof}
Consider the stabilizer operator $Q$ defined as stated in the Lemma. Then assume that the measurement outcomes of each of the qubits $i\in I\cap J$ is $s_i$. The projective measurement operators of these qubits can be written as $\frac12\left(\mathbbm{1}_i + s_i q_i\right)$. For instance, the projection operator associated with the $X+$ outcome is 
$
\dyad{X+}
=\frac12 \left(\mathbbm{1}+X\right)$. 
Since the projection operators commute with $q_j$ themselves, the following chain of equalities holds:
\begin{align}
    \ket{\psi}&=C \prod_{i\in I\cap J} \frac{\mathbbm{1}_i + s_i q_i}{2}\!\!\!\underbrace{\ket{G}}_{~~=Q\ket{G}} \notag \\
    &= C \prod_{i\in I\cap J} \frac{\mathbbm{1}_i + s_i q_i}{2} \left(\pm\prod_{j\in J} q_j\ket{G}\right) \notag \\
    &= \pm \!\!\!\prod_{j\in J\setminus I} \!\!\!q_j  \prod_{i\in I\cap J}\!\!\! s_i\underbrace{\left(C \!\!\!\prod_{i\in I\cap J} \!\!\!\frac{\mathbbm{1}_i + s_i q_i}{2} \ket{G}\right)}_{~~=\ket{\psi}} \notag\\
    &= \underbrace{\left(\pm\!\!\! \prod_{i\in I\cap J} s_i \prod_{j\in J\setminus I}\!\!\! q_j  \right)}_{~~=Q'} \ket{\psi}.
    \label{eq:lem1proof}
\end{align}
Here, $C$ is a normalization constant, and the following property
\begin{equation}
    \frac{\mathbbm{1} + s q}2 q = s \frac{s q + \mathbbm{1}}2 \quad \text{ for } s = \pm 1    
\end{equation}
was used, valid for any $q$ such that $q^2=\mathbbm{1}$ and in particular for $q\in\{X,Y,Z\}$.
\end{proof}

\begin{figure}
    \centering
    \includegraphics[width=\linewidth]{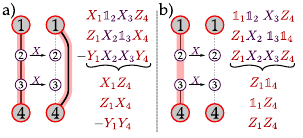}
    \caption{Partial local measurements on a graph state induce correlations in the remaining qubits.
	a) If the 4-vertex path graph state is measured according to the shown measurement pattern, that is, measuring the inner vertices in the $X$-basis, we get a connected two-vertex graph.
	b) The graph here is not connected: the edge $\{2,3\}$ is absent, possibly due to the scenario outlined in \cref{sec:uenoise}. If the corresponding graph state is measured according to the same measurement pattern, we get a disconnected two-vertex graph. Measurements can never induce entanglement between parties that were separable before.  
	}
    \label{fig:fiveqbpath}
\end{figure}

As an example, consider the path graph of 4 qubits, as shown in \cref{fig:fiveqbpath}a and the Pauli string $O = X_2 X_3$ which has support on the vertices in $I = \lbrace 2, 3 \rbrace$. The stabilizer of the four-qubit path graph contains $2^4$ stabilizer elements. Three of them are $Q^{(a)} = X_1 X_3 Z_4$, $Q^{(b)} = Z_1 X_2 X_4$, and $Q^{(c)} = - Y_1 X_2 X_3 Y_4$ which have support on qubits in $J^{(a)} = \lbrace 1, 3, 4 \rbrace$, $J^{(b)} = \lbrace 1, 2, 4 \rbrace$, and $J^{(c)} = \lbrace 1, 2, 3, 4 \rbrace$, respectively. We have $Q^{(a)} \vert_{3} = O\vert_{3}$, $Q^{(b)} \vert_{2} = O\vert_{2}$, and $Q^{(c)} \vert_{2,3} = O\vert_{2,3}$.\footnote{The notation $O|_{S}$ denotes restriction of a product of operators to a qubit subset $S$.}  Following \cref{lem:lem1}, the post-measurement state after measuring qubits 2 and 3 in the $X$-basis is stabilized by  $s_3 X_1 Z_4$,  $s_2 Z_1  X_4$, and $- s_2 s_3  Y_1  Y_4$. 
For $s_2, s_3 = +1$, these are stabilizer operators of the two-vertex graph state with one edge between them, as shown in the lower part of \cref{fig:fiveqbpath}a. For other measurement outcomes $s_2, s_3$, we get stabilizer operators which are equivalent up to local unitary operations. 
 Recall that this graph state is locally unitarily equivalent to the Bell states.

Another example of a four-vertex disconnected graph is shown in \cref{fig:fiveqbpath}b. 
This graph state is also
stabilized by the $2^4$ operators. The three of them that coincide with the measurement pattern $O = X_2 X_3$ are $Q^{(a)} = Z_1 X_2 $, $Q^{(b)} = X_3 Z_4$, and $Q^{(c)} = Z_1 X_2 X_3 Z_4$ that have support in $J^{(a)} = \lbrace 1, 2 \rbrace$, $J^{(b)} = \lbrace 3, 4 \rbrace$, and $J^{(c)} = \lbrace 1, 2, 3, 4 \rbrace$, respectively. We have $Q^{(a)} \vert_{2} = O\vert_{2}$, $Q^{(b)} \vert_{3} = O\vert_{3}$, and $Q^{(c)} \vert_{2,3} = O\vert_{2,3}$.  Following \cref{lem:lem1}, the post-measurement state after measuring qubits 2 and 3 in the $X$-basis is stabilized by  $s_2 Z_1 \1_4$,  $s_3 \1_1  Z_4$, and $s_2 s_3  Z_1  Z_4$. No correlations are present, since both remaining qubits are stabilized by independent local operators, and the resulting state is a product one.

Note that the stabilizer operators of post-measurement states are not always graph state stabilizer operators, but are always locally unitarily equivalent\footnote{More precisely, the post-measurement states are stabilizer states \cite{hein2006entanglement} and all stabilizer states are local Clifford equivalent to graph states \cite{vandenNest04graphicalClifforts}. The set of local Cliffords is a subset of the set of local unitaries.} 
to them \cite{hein2006entanglement}. 
The example in \cref{fig:fiveqbpath}a is chosen so that the post-measurement state does not require these local unitary transformations. Choosing a path graph of odd length results in different Bell state stabilizers.  
In the example shown in \cref{fig:fiveqbpath}b, we get compositions of Pauli $Z$-operators as post-measurement stabilizers. Applying the local unitary $H_1 H_4$ (where $H$ is the Hadamard gate) to them leads to the actual graph state stabilizer operators.

The remaining qubits are fully defined by the correlations developed during the measurement process. Thus, it is possible to find measurement schemes for which the output state has certain properties. For example, if the goal is to extract a Bell pair from a larger graph state $\ket{G}$, a measurement scheme $\prod_{i\in I}o_i$ must be found such that there exist stabilizers $Q$ of $\ket{G}$ such that $Q_{I\cap J}=O_{I\cap J}$ and ensuring Bell-like correlations between the terminal vertices. For example, in the 4-qubit path graph of \cref{fig:fiveqbpath}a and the above chosen measurements, the Bell state stabilizers are extracted. More elaborate measurement patterns have been studied for use in quantum networks \cite{hahn2019quantum,Mannalath2023,mannalath2022entanglement}. 
 Further examples for graph families analyzed later in this article can be found in \cref{app:postmeascorr}.

As a direct result of the  \cref{lem:lem1} we get the following observation:
\begin{corollary}
If a stabilizer operator $Q=\pm \prod_{j\in J} q_j$ is fully embedded in the measurement scheme 
$O=\prod_{i\in I} o_i$ such that $J\subset I$ and $Q=\pm O|_J$, the only observable 
combinations of measurement outcomes within $J$ are those such that $\prod_{j\in J}s_j = \pm 1$, where the sign is determined by the sign $\pm$ in front of the product defining $Q$.
\label{cor:obsstruct}
\end{corollary}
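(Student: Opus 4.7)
The plan is to treat this as a direct specialization of \cref{lem:lem1} to the case where the stabilizer operator's support is entirely contained in the measurement set. First, I would observe that the hypothesis $J \subset I$ forces the decomposition $I \cap J = J$ and $J \setminus I = \emptyset$. Feeding this into \cref{eq:postmeascorrstab} collapses the second product over $J \setminus I$ to the empty product, i.e., to the identity, leaving
\begin{equation}
Q' = \pm \prod_{j \in J} s_j \cdot \mathbbm{1},
\end{equation}
where the overall sign is inherited from the sign of $Q$.

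Next, I would invoke the conclusion of \cref{lem:lem1}, which asserts that $Q'$ stabilizes the (nonzero, post-selected) post-measurement state $\ket{\psi}$. Since $Q'$ here is simply a scalar multiple of the identity, the eigenvalue equation $Q'\ket{\psi} = \ket{\psi}$ reduces to the numerical constraint
\begin{equation}
\pm \prod_{j \in J} s_j = 1.
\end{equation}
Equivalently, $\prod_{j \in J} s_j = \pm 1$ with the sign matching the global sign in the definition of $Q$. Any pattern of outcomes violating this identity is inconsistent with the chain of equalities in \cref{eq:lem1proof}, since applying the corresponding projectors to $\ket{G}$ annihilates it (the state $\ket{\psi}$ would have to be zero), so such outcomes occur with probability zero and cannot be observed.

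Finally, I would close by noting that this argument covers all outcome strings: those satisfying the sign constraint are the only ones that can occur, and no further restriction on the remaining (unmeasured) qubits is imposed by the chosen $Q$, since $J \setminus I$ is empty. I do not expect any real obstacle here; the corollary is essentially a bookkeeping consequence of \cref{lem:lem1}, and the only subtlety worth flagging explicitly is that the global sign $\pm$ of $Q$ must be carried through carefully so that the parity condition on $\prod_{j\in J} s_j$ is stated with the correct sign.
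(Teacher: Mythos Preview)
Your proposal is correct and follows exactly the approach the paper indicates: the paper's own proof is the single sentence ``The proof follows by considering only $J\subset I$ in \cref{lem:lem1},'' and you have simply spelled out that specialization in detail, collapsing \cref{eq:postmeascorrstab} to a scalar via $J\setminus I=\emptyset$ and reading off the parity constraint.
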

The proof follows by considering only $J\subset I$ in \cref{lem:lem1}.
As an example, consider the graph state shown in \cref{fig:fourqubit}a, with the measurement scheme $O=X_1 X_3$. This operator stabilizes the relevant graph state, and in the language of \cref{cor:obsstruct}, $Q=O$. The allowed measurement outcomes associated with this setup are $(+1,+1)$ and $(-1,-1)$. If we measure outcomes with different signs, we conclude that we had a different initial state.  Later we will use this observation to detect noise in states.

\section{Imperfect graph states} \label{sec:noise}
In this section, we analyze three classes of physically motivated noise models. They all stem from different methods of graph state preparation. The origins and effects are only sketched here due to complexity; please refer to \cref{app:noise} for the detailed derivation.

As outlined in the previous section, for perfect graph states Bell pairs can be extracted, provided the vertices of interest are connected by a path. Once a path is found, vertices adjacent to it are measured in the $Z$-basis and then the path vertices in the $X$-basis. In  \cref{sec:noise_extract}  we show that in the presence of noise, the performance of a simple path graph is not optimal, and the choice of other graphs can enhance the entanglement quality of the remaining qubits.

As a minimal example, consider the graph shown in \cref{fig:fiveqbpath}b. It may arise during preparation of the path graph (\cref{fig:fiveqbpath}a) via an imperfect $\CZ$ gate, and in such a case the Bell pair can not be generated: the two graph parts are uncorrelated, and local measurements can not produce entanglement from a product state. In a path graph, even a single missing edge leads to this scenario.

Regardless of origin, noise processes often have a stabilizer-consistent description. They produce a mixture of stabilizer states, or the noise effect can be otherwise understood using the stabilizer-like correlations. In \cref{sec:noise_extract}, we show how to use this observation to mitigate the effects of noise.

Implementations like superconducting qubit quantum computers \cite{Stehlik2021,long2021universal}, or ion traps \cite{wunderlich2009two,piltz2014trapped} prepare graph states by letting an initial product state evolve via an engineered Ising-like interaction Hamiltonian
\begin{align}
    H = \sum_{i\neq j} \gamma_{i,j} \frac{(Z_i+\1) \otimes (Z_j +\1)}4,
    \label{eq:isingham}
\end{align}
where $\gamma_{i,j}$ is the interaction strength between qubits $i$ and $j$.
Evolution according to such an interaction pattern effectively implements a sequence of controlled-phase operators $\CP_{i,j}=\diag(1,1,1,\exp(-i\varphi_{i,j}))$. The structure of phases $\varphi_{i,j}=\gamma_{i,j} t$ is determined by the interaction strengths $\gamma_{i,j}$ and the evolution time $t$. Inaccuracies in controlling either of them lead to the generation of a weighted graph state \cite{Hartmann_2007}. However, under general assumptions, the effective quantum channel implemented this way has a simple decomposition into local (one or two edges) operations, on top of the product of a sequence of $\CZ$ operations preparing the desired state. 

Ideally, all of the generated phases $\varphi_{i,j}$ are equal to $\pi$. More realistically, with each experimental run, the actual realization will be $\varphi_{i,j}=\pi+\varepsilon_{i,j}$ with $\varepsilon_{i,j}\ll 1$. We discuss the two extreme cases of correlations of phase noises $\varepsilon_{i,j}$. The first, \emph{uncorrelated phase noise}, assumes that the noises at different edges $\{i,j\}$ are completely independent. The other, \emph{correlated phase noise}, assumes the contrary: each phase noise factor is equal to any other in a single experimental run, $\varepsilon_{i,j}=\varepsilon$. The noise encountered in experiments likely has yet a different structure with partial correlations between the edges; still, the two extremal processes may help in modeling the real-life scenario, and strategies developed to mitigate them should apply in the more general cases.

\subsection{Uncorrelated edge noise} \label{sec:uenoise}

The effect of phase noise is especially simple if the phases $\varphi$ are distributed symmetrically and uncorrelated around $\pi$. The resulting channel is described by a probabilistic application of $\CZ$:  with probability $(1-p)$, an edge is created via $\CZ$, and with probability $p$, no operation is performed. See \cref{app:uncorr_phase_noise} for a more detailed discussion.
Each of the controlled-$Z$ unitary operators in  \cref{eq:grstate} is replaced with the quantum channel
\begin{align}
    \Cp[\rho] = (1-p)\Cz[\rho] + p \rho,
\end{align}
where $p$ depends on the distribution of $\varepsilon$ (see \cref{eq:pasphase}), $\Cz[\rho]=\CZ \rho \CZ^\dagger$, and $\rho = \dyad{+}^{\otimes \abs{V}}$. 
Thus the effect of noise consists in the generation of \emph{randomized graph states} \cite{Wu_2014_random_graph}: an ensemble of graph states built atop the original graph $G$ by removing the edge $\{i,j\}\in E$ with probability $p_{i,j}$. Related graph ensembles have been introduced in \cite{Erd_s2022} in a classical context. For the constant probability, the end state can be written as
\begin{equation}
    \rho = \sum_{E' \subset E} (1-p)^{\vert E' \vert} p^{\vert E \setminus E' \vert} \dyad{G'},
    \label{eq:randomgraph}
\end{equation}
where $G' = (V,E')$ is a graph with edge set $E'$ being a subset of the edge set $E$ of the original graph $G$. 

Note that it is an \emph{effective description}: in each experimental run, the prepared state is a pure  \emph{weighted graph state} \cite{Frantzeskakis23weightedgraphs}. The density operator in \cref{eq:randomgraph} is a state of knowledge about the system, averaged over different noise realizations.

\subsection{Correlated edge noise} \label{sec:cenoise}
If the weights in the Hamiltonian of \cref{eq:isingham} are all equal ($\gamma_{i,j} = 1$) but the interaction time is not perfectly controlled, within one experimental run, all the resulting phases $\varphi_{i,j}$ are equal.   This is the \emph{correlated phase noise}: it has an effect similar, but more involved, to the uncorrelated case. Here, we also assume that $\varphi$ is symmetrically distributed around $\pi$. 
Let us assume that in each experimental run $\varphi_{i,j}=\varphi=\pi+\varepsilon$, and $\varepsilon$ is distributed with the normal distribution of the standard deviation $\sigma$. The resulting state can be described by a parameter $p$ depending on the distribution of $\varepsilon$ (\cref{eq:pasphase}). By keeping only the lowest nontrivial order of noise effects and subsequent averaging, the end state can be described with the help of square roots of the unitary operator $\CZ$:
\begin{align}
\rho =& (1-p \abs{ E} )\dyad{G} 
+ p\sum_{G' \subseteq G } \dyad{G'} \\
&+ \frac{p}{2} \sum_{\substack{e, e'  \in E \\ e \neq e'}} \sum_{s, s' \in \{\pm 1\}} s s' \CS_e^{s} \CS_{e'}^{s'}   [\dyad{G}], \notag
\end{align}
where the summation runs over graphs $G' \subseteq G $ with single edges removed from $G$, and $\CS^\pm[\rho]$ is the application of one of the two unitary square roots of $\CZ$. 
See \cref{app:noise} for details.

\subsection{Local $Z$ flip noise} \label{sec:lznoise}
In linear optic experiments, where Bell pairs are generated and subsequently fused to get a graph state, the entangling operation may fail as a result of partial photon distinguishability. 
This can be caused by imperfect frequency or spatial mode overlap \cite{Browne2005, Rohde2006, Rahimi_Keshari2016}.
 See  \cref{app:corr_phase_noise} for a derivation. The effect of a noisy fusion gate described this way can be modeled as a \emph{perfect} fusion followed by a probabilistic application of the $Z$ unitary to the surviving optical qubit $i$:
\begin{align}
    F_i (\rho) = (1-p) \rho + p Z_i \rho Z_i.
\end{align}
Here, the probability $p$ depends on the level of photon distinguishability of the fused qubits. Thus, the final graph state developed in this procedure can be thought of as a probabilistic application of local $Z$ unitaries to each of the qubits in the graph.

\subsection{Quantification of the noise effects} \label{sec:quannoise}

In the following sections, we determine the entanglement quality of two- and three-qubit Bell and GHZ graph states left after a measurement procedure performed on a graph state. 
To quantify the entanglement quality of the post-measurement state, we use the fidelity with respect to the ideal state $\ket{\psi}$.
If the post-measurement state is $\sigma$, the following holds:
\begin{align}
    \mathcal{F}(\ket\psi\!\bra\psi,\sigma)=\langle \psi \vert \sigma \vert \psi \rangle.
    \label{eq:fidelityone}
\end{align}
Note that the ideal post-measurement state $\ket{\psi}$ depends on the observed outcomes during the experiment: they do affect the signs in the final stabilizer structure (see \cref{lem:lem1}). Therefore, the ideal state $\ket\psi$ must be determined from the signs of the outcomes, provided they are consistent with the stabilizer structure. 
 The different outcomes may appear with varying probabilities, and the expectation value of associated fidelities (\cref{eq:fidelityone}) is the average fidelity $\langle\mathcal{F}\rangle$. This is effectively approximated by taking $N$ samples from the defining ensembles for a specific noise model with a Monte Carlo algorithm and thus
\begin{align}
    \langle\mathcal{F}\rangle \approx \frac1N\!\!\! \sum_{\text{outcomes}}\!\!\! \mathcal{F}(\text{outcome}).
    \label{eq:avfidelity}
\end{align}

Since we are mostly interested in the behavior in the low noise limit, we define \emph{fidelity susceptibility} $\alpha$ as the rate of change of mean fidelity $\langle\mathcal{F}\rangle(p)$:  the ensemble of extracted and postselected states depends on the noise, and the susceptibility captures the effects of small but not negligible noise:
\begin{equation}
\alpha = - \left.\frac{\dd \langle \mathcal{F} \rangle(p)}{\dd p}\right|_{p=0}.
\label{eq:fidsus}
\end{equation}

For zero noise, the fidelity is equal to 1 by definition. For any other amount, the fidelity may only decrease, and hence the susceptibility is positive. Its numerical magnitude determines how fragile the extraction procedure is to the effects of noise: a robust one has a small 
susceptibility.

Practically, the fidelity susceptibility is estimated here by a discrete derivative approximation, as $\alpha\approx\frac{1-\langle\mathcal{F}\rangle(p_*)}{p_*}$. In our computations, we chose $p_* \approx 10^{-2}$. The noise value $p_*$ was chosen such that it is empirically low enough for reasonable approximation and high enough such that numerical noise is not relevant.

\section{Extraction from noisy ensembles} \label{sec:noise_extract}

In this section, we apply the methods developed in previous parts to show the effects of the noise and find measurement patterns that minimize them. 
As mentioned previously, we analyze a task related to teleportation: the extraction of a high-quality Bell pair  from a larger noisy state, close to a graph state $\ket{G}$. We summarize our procedure as a Monte Carlo simulation of multiple experimental runs:
\begin{enumerate}[label=(\alph*)]
    \item We prepare a potentially noisy version of a graph state associated with one of the graphs shown in \cref{fig:families}. The analyzed noise families are described in \cref{sec:noise}.
    \item We measure all qubits associated with the inner vertices in the Pauli $X$ basis, see \cref{fig:crazy-structure}, obtaining random outcomes in each run.
    \item We postselect on the measurement outcomes: we keep the state if the product of measurement outcomes is equal to 1 and discard the state otherwise, in accordance with \cref{cor:obsstruct}.
    \item If we keep the state, the post-measurement state consists of the two qubits associated with the terminal vertices of the graph. The fidelity of the post-measurement state is evaluated with respect to the expected ideal case, see \cref{sec:quannoise}.
    \item For evaluating the noise resistance, we repeat the protocol multiple times so that the calculated average fidelity, given in \cref{eq:avfidelity}, converges. 
\end{enumerate}

The main results are shown in Figures \ref{fig:edgesus} and \ref{fig:corredgesus}.

We mostly concentrate on the scenario where the entangling gates are almost perfect, and in such a case the fidelity susceptibility ($\alpha$, defined in \cref{eq:fidsus}) quantifies the noise effects. 
For the estimates of $\alpha$  only a limited number of quantum states forming the ensemble (discussed in \cref{sec:noise}) is relevant, and for the calculation only this subset is taken into account, reducing the uncertainty inherent to the Monte Carlo procedure. The code used for calculation is publicly available \cite{zenodo}, and the numerical details can be found there.\\

\subsection{Noise-correcting structures and robust families of states}

As introduced in \cref{sec:noise}, noise processes lead to an ensemble of different states $\{(p_i, \ket{\psi_i} )\}$ being prepared, where one of the ensemble components is the ideal state $\ket{G}$. This ensemble can be interpreted as a probabilistic preparation of an unknown state $\ket{\psi_i}$ with probability $p_i$. The measurement procedure  might now yield results inconsistent with $\ket{G}$, and observations of those are an indication of a state other than $\ket{G}$ being prepared -- see \cref{cor:obsstruct}.

As an example, consider the square graph -- a $4$-cycle of vertices $\{1,2,3,4\}$, as shown in \cref{fig:fourqubit}a. The operator $O=X_1 X_3$ is an element of the stabilizer of $\ket{\square}$, thus the only 
outcomes that can be observed in the sequential measurement of qubits $1$ and $3$ are $+1,+1$ and $-1,-1$. Consider now all edge subgraphs $\{G'\}$ of the square graph and a physical process in which an ensemble of graph states $\{p_{G'}, \ket{G'}\}$ is prepared, some of them are shown in \cref{fig:fourqubit}b. 
The operator $O$ (and $-O$) is not a part of the stabilizer of any proper subgraph with at least one edge. Thus, 
measurement outcomes of $+1,-1$ or $-1,+1$ are possible. Observing those in the same measurement procedure implies that the pre-measurement state was not $\ket{\square}$. 
Therefore, measuring according to stabilizer operators and evaluating the measurement outcomes performs a probabilistic check whether $\ket{G}$ was truly prepared. We postselect accordingly, that is, we discard the state when the 
measurement outcomes do not agree with the expected ones. However, the existence of such parity-checking structures depends on the graph and the task for which the associated quantum state is used. More examples are discussed in \cref{app:postmeascorr}.

\begin{figure}
    \centering
    \includegraphics[width=.7\linewidth]{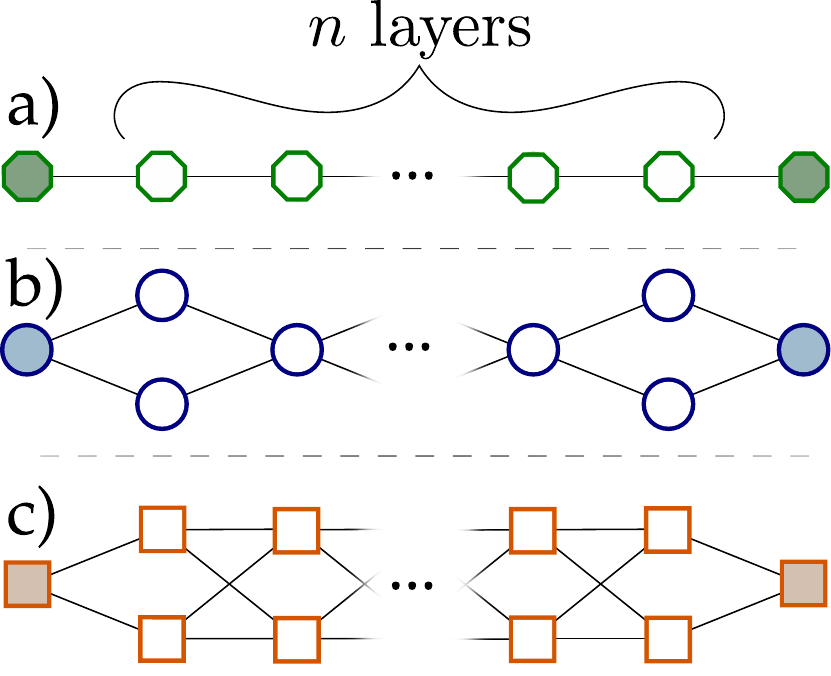}
    \caption{Considered families of graphs for the task of Bell pair extraction, parameterized by the length of the internal part $n$. The terminal qubits, designated for the creation of a Bell pair, are denoted with filled nodes. From the top: path graph, twisted pair, crazy graph. }
    \label{fig:families}
\end{figure}

A Bell pair can be prepared with the help of a path graph state and local $X$ measurements (\cref{fig:fiveqbpath}a), but such a measurement scheme does not allow for embedded stabilizer operator parity checks.
Other graphs and measurement patterns do better in this regard. 
 Here we present some alternatives. 

The families of graph states analyzed are parameterized by the length between the terminal qubits across which a Bell pair is to be prepared (see \cref{fig:families}). Apart from a simple path graph (\cref{fig:families}a), included as a benchmark for comparison, we include two other families described below -- both of them have their strengths.

The \emph{twisted pair} graph  \cite{anders2012twistedgraph}, shown in \cref{fig:families}b,  
is built from layers, where each qubit in the layer $k$ is connected with every qubit from the layers $k-1$ and $k+1$. The number of qubits in each layer alternates between $1$ and $2$, which enables the existence of a restricted set of stabilizers: for each 2-qubit layer consisting of the qubits $\{i,j\}$, the operators $X_i X_j$ stabilize $\ket{G}$. Additionally, it is very robust to certain types of noise because of its relatively simple geometry.

The \emph{crazy} graph similarly consists of layers of $2$ qubits each, with  full connectivity between the adjacent layers. It is known for its noise robustness  \cite{rudolph2017crazygraph,Morley-Short_2019_crazygraphs}, which stems from the structure of its stabilizer operators. See \cref{fig:crazy-structure} and \cref{app:postmeascorr} for more details. 

\begin{figure}[t]
    \centering
    \includegraphics[width=\linewidth]{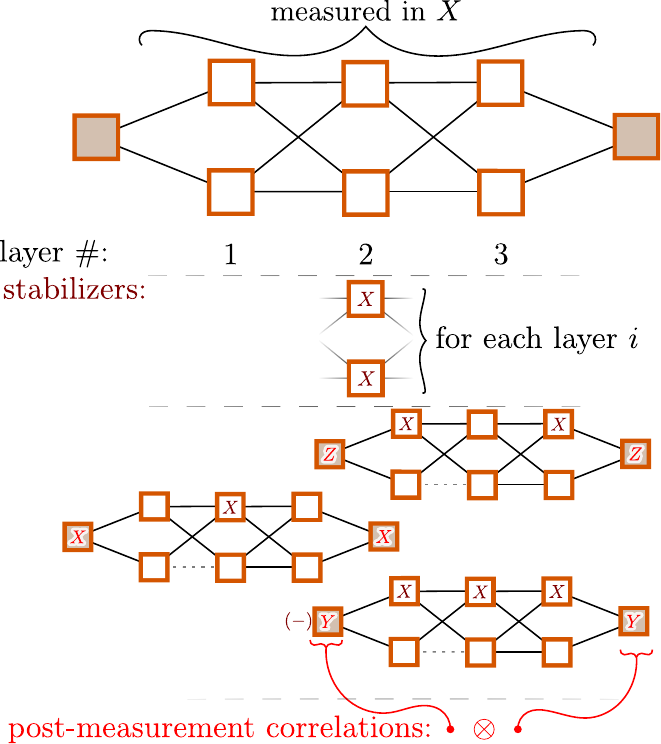}
    \caption{Crazy graph of length 3 together with an all-$X$ measurement pattern possesses embedded error-checking and correction structures. Middle: each layer is associated with an $X\otimes X$ stabilizer, which can be used to confirm the proper preparation of the state. Bottom: even if any single internal edge (dotted line between layers 1 and 2) is missing, the terminal qubits correlations are not affected: the operators shown are stabilizers, regardless of whether the edge is present or not.}
    \label{fig:crazy-structure}
\end{figure}

Each of the graph states associated with the aforementioned three families (path graph, twist graph, and crazy graph) can be used for the extraction of a Bell pair across the terminal qubits (shaded vertices in \cref{fig:families}). Only local $X$ measurements are used for all three, and in the ideal case, a maximally entangled state is produced.
 This is because there exist stabilizer operators consistent with this measurement pattern, as in the exemplary 4-qubit path graph discussed in \cref{sec:extraction}.

Similar structures can be found for all three graph families mentioned. We find sets of stabilizer operators 
which have $X$ operators in the internal section and Pauli operators which ensure correlations on the terminal qubits -- see \cref{app:patterns}. These operators are 
consistent with the measurement scheme.

\subsection{Bell pair quality scaling for different types of noise}

We analyze the approach of noise-detecting structures for the graph families introduced above.  The procedure is employed for different layer sizes $n$, see Figures \ref{fig:edgesus} and \ref{fig:corredgesus}. We find that the twisted pair and crazy structures are less affected by the noise, as evidenced by the fidelity susceptibility  $\alpha$ (\cref{eq:fidsus}). The value of $\alpha$ is smaller, and in the case of the crazy graph does not depend on the graph length. Furthermore, the susceptibility for the crazy graph structure does not depend on the length, as a result of the correctional stabilizers mentioned \cref{app:patterns} and shown in \cref{fig:edgesus}. This behavior is observed both for uncorrelated phase noise, resulting in probabilistic edge losses and a local $Z$ flip noise.

The results can be understood as the destruction of the perfect correlations present in graph states, mathematically described in \cref{sec:extraction}. Even if for every ensemble element the end state is maximally entangled, the mixture of such states might have less or no entanglement. Postselection on the measurement outcomes consistent with the ideal stabilizer structure helps twofold: one of its effects is probabilistic error detection. The other, appearing in the case of the crazy graph ensemble, is more subtle: after postselection, the end state may be the same for a large portion of the ensemble elements (see \cref{app:postmeascorr}).

\begin{figure}[t]
    \centering
    \includegraphics[width=\linewidth]{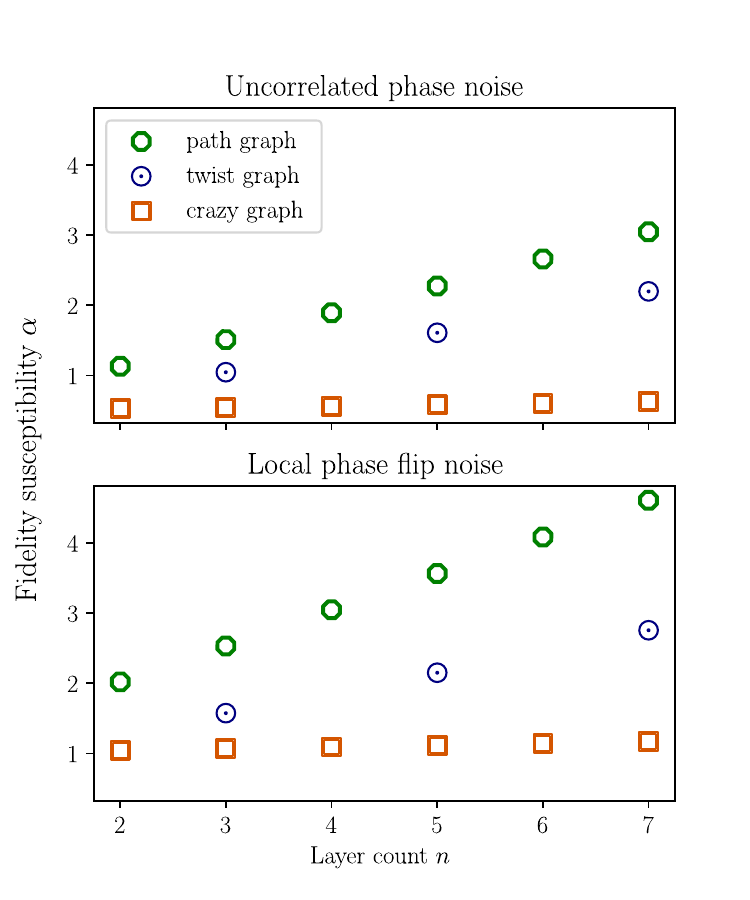}
    \caption{The effects of the edge loss (top) and local phase flip (bottom) noise on the task of Bell pair extraction for the three analyzed graph state families. The susceptibility $\alpha$ (\cref{eq:fidsus}) is the initial falloff of fidelity in the low noise limit of the resulting two-qubit state with respect to the expected Bell pair. The results here only take into account the measurement outcomes consistent with the ideal state. In this case (low noise limit) the dominant contributions are the single edge losses (top) and single qubit flips (bottom). The constant susceptibility of the crazy graph is a result of the noise at the terminal edges only: each internal edge loss is automatically corrected for, and every single flipped qubit is detected. 
    }
    \label{fig:edgesus}
\end{figure}

An additional entanglement-preserving structure can be found in the case of perfectly correlated phase noise. This is the source of zero susceptibility of the twist graph and periodic behavior of the crazy graph in \cref{fig:corredgesus}. The authors of Reference \cite{Frantzeskakis23weightedgraphs} 
recently observed that in the case of the path graph, postselection on different measurement outcomes yields vastly different results in 
terms of entanglement quality. We have found that this result does generalize to the other linear graphs analyzed by us. 
Both for the crazy and  twist graph families the lowest order nontrivial noise effects can be canceled completely if all the measurement outcomes are postselected on measuring the $(-1)$ outcome of $X$.

\begin{figure}[t]
    \centering
    \includegraphics[width=\linewidth]{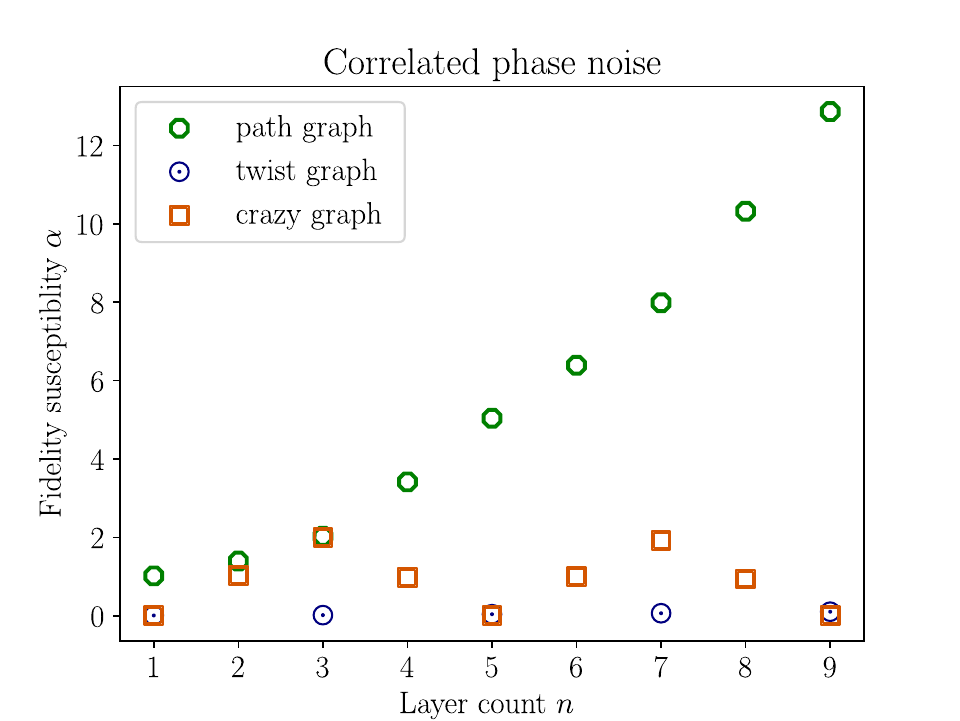}
    \caption{The effects of correlated phase noise on the task of Bell pair extraction for the three analyzed graph state families. The susceptibility $\alpha$ (\cref{eq:fidsus}) is the initial falloff of fidelity in the low noise limit of the resulting two-qubit state with respect to the expected Bell pair.   Here, only the cases postselected on observing exclusively minus signs during the $X$ measurement are taken into account. In this case (low noise limit with restrictive postselection) the twist graph is able to remove the effect of noise completely, and the crazy graph shows a periodic pattern of susceptibilities, still significantly reducing the noise effect. The situation inverts for observing exclusively $+$ signs: then, the noise gets amplified. 
    }

    \label{fig:corredgesus}
\end{figure}

This is consistent with our previous observations: such an observed measurement pattern does not violate parity constraints arising from the stabilizer structure. This result can be understood by studying the modifications of the stabilizer operators of $\ket{G}$ under correlated controlled-phase noise -- see \cref{app:corredgecancel}.

\section{Extraction of GHZ states and large noise regime} \label{sec:GHZ}

\begin{figure}
\centering
\raisebox{3cm}{a)}\includegraphics[width=.7\linewidth]{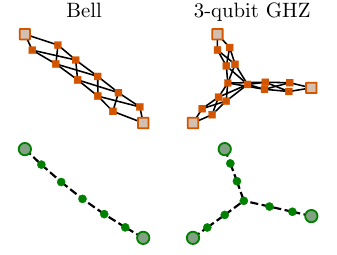}\\
\raisebox{2cm}{b)}\includegraphics[width=1\linewidth]{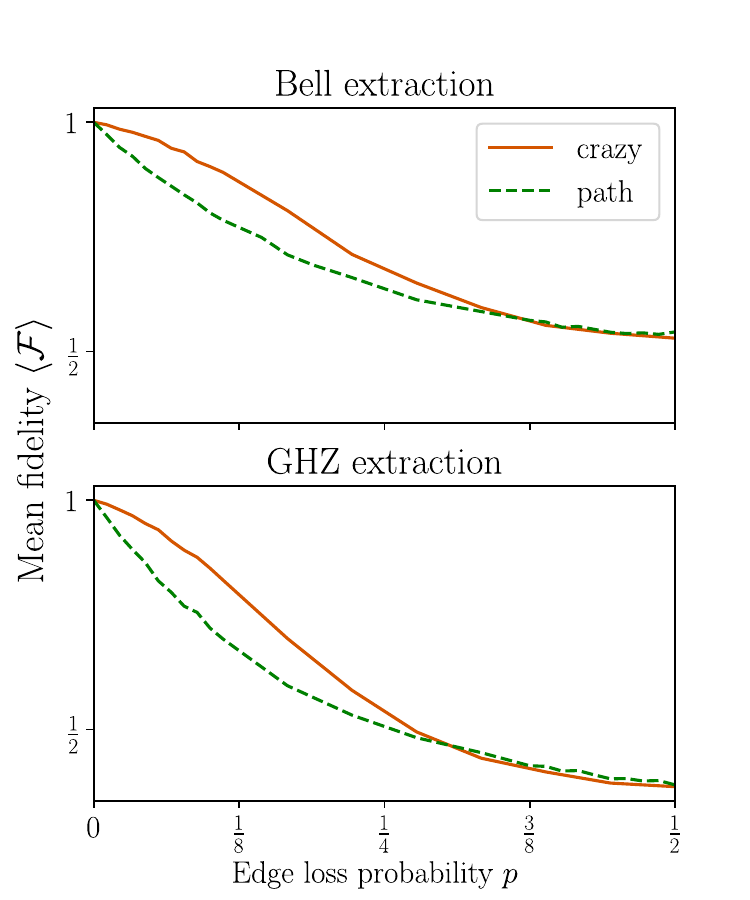}
\caption{ a) Structures of the analyzed graphs: path and crazy templates yielding Bell pairs and 3-qubit GHZ states upon measurements of the central qubits. Similarly to the previous case, all qubits except the terminal ones (highlighted and enlarged)  are measured in the $X$ basis. b) Fidelity of the end state with respect to the ideal Bell and GHZ states as a function of noise. The fidelity is averaged over possible measurement outcomes consistent with the measurement-embedded stabilizer structures. The initial fall-off rate of fidelity at $p=0$ is the \emph{fidelity susceptibility}, defined in \cref{eq:fidsus}.  
}
\label{fig:constlengthghzbellres}
\end{figure}

In \cref{sec:noise_extract} we concentrated on the Bell pair extraction for various geometries in the low noise regimes. However, the presented structures (twisted and crazy graphs) are of more general use. In this section, we analyze large noise effects on two exemplary scenarios. In addition to Bell pair extraction, we consider a similar procedure for GHZ states \cite{greenberger1989going} arising from a starlike graph, with analogous measurement scheme composed entirely of $X$ operators. The considered graph structures are presented in \cref{fig:constlengthghzbellres}a.  The main result here is the fidelity of output states (\cref{fig:constlengthghzbellres}b) postselected on observing the expected 
outcomes during $X$ measurement of the internal qubits.

The Monte Carlo simulations required to analyze the large noise regime are lengthy, as multiple ensemble elements are relevant for the result. Therefore, for simplicity we consider fixed graph geometries (path and crazy-like with fixed length) with only the uncorrelated phase noise present.

The fidelity $\mathcal{F}$ is determined as the average of ensemble elements and postselected stabilizer-consistent outcomes -- see \cref{eq:avfidelity}. This avoids direct calculations with density operators. The fidelity of the output state to the ideal pure state is estimated using this method until the plots converge smoothly and the variance of the fidelity estimator is negligible.

Analysis of the \cref{fig:constlengthghzbellres} shows that the fidelity response for both the GHZ and Bell state extractions exhibits qualitatively similar behavior. The output state is  more resilient to noise (as compared to the simpler path graph) also in this new task of extraction of the GHZ state. 
 Most importantly, the noise effect is greatly reduced in the intermediate noise limit.  This performance gain is due to the postselection involving noise-detecting and -correcting capabilities intrinsic to this graph family. Thus, the `crazy graph' structures can serve as a general link. They can generalize beyond Bell pair extraction to more complex multi-qubit systems, maintaining their noise-resistant properties.

 Further investigations of different connectivity structures suggest that this behavior is universal, although a formal proof is beyond the scope of this study. 
If true, the `crazy graph' can serve as a link similar to a simple path link. That is, it has similar functions but some level of error correction is implemented.

\section{Summary and Outlook} \label{sec:sum}

This study has advanced the understanding of noise processes in graph states, specifically addressing the effects of preparation by the Ising-like interaction and photonic qubit fusion. We have found graph structures applicable in error detection and correction, especially in tasks such as information transfer and GHZ state preparation. This approach uses additional qubits for probabilistic verification and demonstrates a new method to improve resilience against specific noise disturbances. 
 The mathematical formalism used in the calculations has been previously used in the nonlocality detection in graph states \cite{Meyer_2023_graph_inflation,meyer2024selftestinggraphstatespermitting} and could be extended to other families beyond Bell and GHZ states.

Future research may include more general tasks found in the measurement-based quantum computation approach. So far, our results are applicable to measurement patterns composed of Pauli operators, but it is known to restrict the set of possible operations to a classically simulable one. A similar method, based on non-Pauli measurements, but still allowing for noise effects reduction, could open new pathways for quantum computation, albeit with conceptual and practical complexities. The methods presented here can be generalized for arbitrary output graph geometry, and the applicability of this procedure for quantum metrology \cite{Shettell2020} could be studied.

Additional platforms for generating graph states can also be analyzed: in this context, sequential single-atom emitters \cite{Thomas2022} are especially interesting. Although noise processes are fairly complex, such systems are capable of producing extensive path graph states and present a unique opportunity for advancing graph state generation techniques. Different experimental realizations natively support different connectivity structures \cite{cao2023generation,Thomas2022,wunderlich2009two,Li2019,Lu2007}, and optimal utilization of a given experimental setting may require further development of the presented techniques.

\section*{Acknowledgments}
We thank Frederik Hahn, Mariami Gachechiladze, and Jan L.\ B\"onsel 
for discussions. This work was supported by 
the Deutsche Forschungsgemeinschaft  (DFG, German Research 
Foundation, project numbers 447948357 and 440958198), the 
Sino-German Center for Research Promotion (Project M-0294), 
the ERC (Consolidator Grant 683107/TempoQ), the German 
Ministry of Education and Research (Project QuKuK, BMBF Grant 
No. 16KIS1618K), the Stiftung der Deutschen Wirtschaft,  
the European Union’s Horizon 2020 research and
innovation programme under the Marie Skłodowska-Curie grant agreement (No 945422), and
the Austrian Science Fund (FWF) project quantA [10.55776/COE1]. 
 Funding from the projects DeQHOST APVV-22-0570,
QUAS VEGA 2/0164/25, and the Stefan Schwarz programme is gratefully acknowledged. 
For Open Access purposes, the authors have applied a CC BY public copyright license to any author accepted manuscript version arising from this. 
The authors acknowledge TU Wien Bibliothek for financial support through its Open Access Funding Programme.

\onecolumn

\newpage
\appendix

\section{Bell pair creation is equivalent to teleportation}
\label{app:bellteleportation}
As mentioned in the introduction, the creation of a Bell pair is equivalent to the task of information transfer (teleportation) across the graph. This is so because if the Bell pair creation in the graph $G$ is viewed in the context of a larger graph $G'$, the entire process, followed by measurement of one of the terminal qubits, performs \emph{qubit fusion}: the unmeasured terminal qubit acts as the two combined. Since all local measurements with nonoverlapping supports do commute, this can be done before any other measurement or afterwards: in the latter case, the information created in one of the qubits is merged with the second one. This is captured by the following Lemma, represented pictorially by the \cref{fig:telep}. 
\newcommand{\inn}{{\operatorname{in}}}
\newcommand{\outt}{{\operatorname{out}}}
\begin{figure}[t]
\centering\includegraphics[width=.9\linewidth]{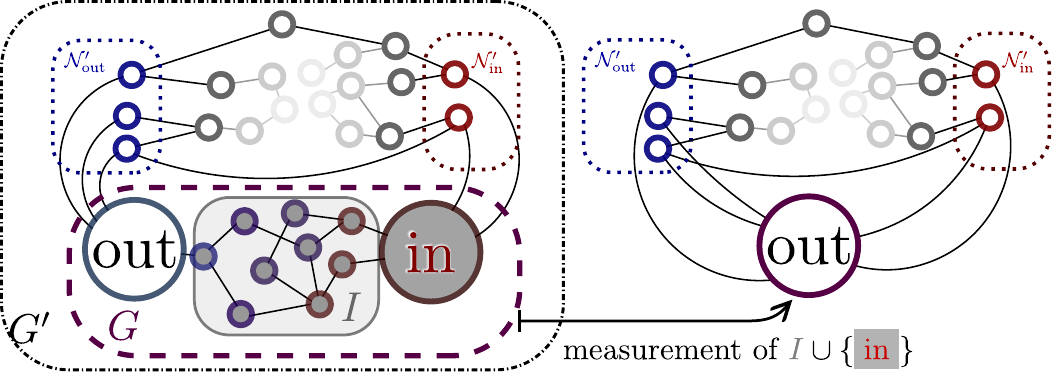}
\caption{If a measurement scheme generates a Bell pair across the vertices $\{\inn,\outt\}$ in the state corresponding to a graph $G$, in a larger $G'$ it can be used to transfer information across the graph. Extending the measurement by a properly chosen operator $A_{\inn}$ leads to the fusion of the two terminal vertices.}
\label{fig:telep}
\end{figure}

\begin{lemma}
Let us assume the following:
\begin{itemize}
\item We fix a graph $G=\{V,E\}$, and choose the two terminal vertices $\{\inn,\outt\}$ such that $V=\{\inn,\outt\}\cup I$.
\item The corresponding graph state $\ket{G}$ undergoes a sequential measurement of $I$ producing a Bell pair across $\{\inn, \outt\}$. According to the \cref{lem:lem1}, this means that the correlations between $\{\inn, \outt\}$ are described by stabilizer operators of $\ket{G}$ modified by the measurement process (\cref{eq:postmeascorrstab}).
\item Let $G$ be a part of a larger graph $G'$ in the following sense: $G'=(V',E')$ embeds $G$ in such a way that $\{\inn, \outt\}$ is a separating set. This means that the only connections between $G$ and the rest of $G'$ are at these two vertices.
\item Let us denote the vertices in $V'\setminus V$ to which both qubits are connected by $\mathcal{N}'(\inn)$ and $\mathcal{N}'(\outt)$, respectively, and assume that the two sets are disjoint.

\end{itemize}

Consider now the following process viewed from the perspective of the graph $G'$: 
\begin{enumerate}[label=(\alph*)]
    \item the measurement scheme, which within $G$ would produce a Bell pair across the terminal qubits, followed by
    \item a product of single qubit unitary operations on the qubits $\{\inn, \outt\}$, aimed to counteract the effect of measurement and bring back the stabilizer operators into a manageable form, and
   \item the measurement of $X_\inn$.    
\end{enumerate}
These operations taken together effectively perform qubit fusion of the terminal qubits. The new state is a graph state where the qubits $V'\setminus V$ are intact, and the new neighborhood of the remaining qubit is $\mathcal{N}_{\outt} \cup \mathcal{N}_{\inn}$.
\end{lemma}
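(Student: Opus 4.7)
The strategy is to work at the level of stabilizer generators and exploit two facts: all Pauli measurements in the protocol act on disjoint qubits and hence mutually commute, and the single-qubit unitaries of step (b) act only on $\{\inn,\outt\}$. This lets us apply \cref{lem:lem1} twice -- once for the sequential measurement on $I$ and once for the final $X_\inn$ measurement -- with step (b) merely reshaping the Bell-pair part of the intermediate stabilizer.

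First I would decompose the generators of $\ket{G'}$ into those attached to interior vertices $I$, terminals $\{\inn,\outt\}$, and external vertices $V'\setminus V$. The disjointness of $\mathcal{N}'(\inn)$ and $\mathcal{N}'(\outt)$ together with the separating-set assumption ensures that every external generator $g_v = X_v \prod_{w\in\mathcal{N}'(v)}Z_w$ contains at most one of $Z_\inn, Z_\outt$ and no Pauli $X$ on the terminals; such generators commute with all measurements on $I$ and pass through step (a) untouched. The hypothesis that the measurement pattern generates a Bell pair in $G$, combined with \cref{lem:lem1} applied to the Bell-producing stabilizers of $\ket{G}$, yields two post-measurement stabilizers supported on $\{\inn,\outt\}$ that anticommute pairwise and generate the Bell algebra. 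Since stabilizer states are locally Clifford equivalent to graph states, step (b) can be chosen to rotate this pair into the canonical generators $X_\inn X_\outt$ and $Z_\inn Z_\outt$.

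For step (c), the measurement of $X_\inn$ with outcome $s$ is again handled by \cref{lem:lem1}. The generator $X_\inn X_\outt$ is consistent with the measurement and yields $s X_\outt$ on the remaining qubits. External generators $g_v$ with $v\in\mathcal{N}'(\outt)$ contain no $Z_\inn$ and pass through unchanged. Those with $v\in\mathcal{N}'(\inn)$ are made consistent by multiplying with $Z_\inn Z_\outt$, which trades the $Z_\inn$ factor for $Z_\outt$ and produces $X_v Z_\outt\prod_{w\in\mathcal{N}'(v)\setminus\{\inn\}}Z_w$. This is exactly the generator of $\outt$ in the graph where $\outt$ has inherited the neighborhood of $\inn$. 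Collecting all the resulting generators reproduces precisely the stabilizer group of the claimed fused graph state on $V'\setminus\{\inn\}$.

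The main obstacle I foresee is showing that the single-qubit correction in step (b) can simultaneously bring the Bell-pair part into the canonical form $\{X_\inn X_\outt, Z_\inn Z_\outt\}$ without corrupting the external generators that touch the terminals through $Z$ factors. One must verify that the correction can be chosen within the Clifford subgroup stabilizing $Z_\inn$ and $Z_\outt$ up to sign, and that any residual sign ambiguity can be reabsorbed by multiplication with $Z_\inn Z_\outt$. A clean resolution should follow by reducing the analysis to the explicit Bell-producing patterns of \cref{fig:fiveqbpath}a and its odd-length analogue, where the needed corrections are explicit Hadamard and phase gates, and then invoking linearity of the stabilizer framework to handle the general case.
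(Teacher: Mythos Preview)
Your outline shares the overall architecture with the paper's proof---track stabilizer generators through the protocol using \cref{lem:lem1}---and your explicit handling of the external generators $g_v$ is in fact more detailed than the paper's closing ``can be proven similarly''. The divergence is in how step~(b) is executed, and this is precisely where the obstacle you flag becomes real rather than merely technical.

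The claim that step~(a) yields ``two post-measurement stabilizers supported on $\{\inn,\outt\}$'' is not correct once you work in $G'$: a Bell-producing $G$-stabilizer, rewritten as a product of $G'$-generators, acquires a $Z$ tail on $\mathcal{N}'(\inn)$ exactly when $g_\inn$ appears in that product (and likewise for $\outt$). These tails are invisible to local Cliffords on $\{\inn,\outt\}$, so you cannot in general manufacture a stabilizer of the literal form $Z_\inn Z_\outt$ with \emph{no} tail---which is what your step-(c) trick ``multiply $g_v$ by $Z_\inn Z_\outt$'' needs. Concretely, for the even-length path pattern the three Bell stabilizers restrict to $X_\inn Z_\outt$, $Z_\inn X_\outt$, $Y_\inn Y_\outt$; reaching $Z_\inn Z_\outt$ forces a Hadamard on one terminal, and that Hadamard is exactly what corrupts the $Z_\inn$ or $Z_\outt$ factors sitting in the external $g_v$.

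The paper's resolution is a single combinatorial observation you are circling but not stating: among the three nontrivial Bell stabilizers, one is always the product containing \emph{both} $g_\inn$ and $g_\outt$. Its terminal part therefore lies in $\{X,Y\}_\inn\otimes\{X,Y\}_\outt$, and its tail is automatically the full $\prod_{v\in\mathcal{N}'(\inn)\cup\mathcal{N}'(\outt)}Z_v$. For this operator step~(b) needs only phase-type gates ($Y\to X$), which fix $Z$ exactly and leave every external generator untouched. After measuring $X_\inn$, this one stabilizer yields $X_\outt\prod_{\mathcal{N}'(\inn)\cup\mathcal{N}'(\outt)}Z$ up to a sign and one further phase gate on $\outt$---precisely the new $g_\outt$. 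This is the clean fix for your last paragraph; no reduction to explicit path patterns is required.
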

\begin{proof}

If within $G$ a Bell pair is produced across the terminal vertices $\{\inn, \outt\}$ as a result of the measurement pattern $O=\prod_{i\in I} o_i$, this means that there exist three $O$-consistent stabilizers involving the terminal vertices in the sense of \cref{lem:lem1}. Combinatorial considerations show that one of those must be of a form such that the operators at the terminal vertices are $A_\inn, B_\outt \in \{X,Y\}$ (but not $Z$), chosen independently. 

Thus, there exists an $G$-stabilizer operator of the form $\pm A_\inn B_\outt \prod_{j\in J} o_j  = g_\inn g_\outt \prod_{j \in J^*} g_j$, with $J, J^* \subset I$. Note that the definition of $g_i$ depends on the graph in question (\cref{eq:generator}): if by $g'_i$ we denote the generators corresponding to the entire graph $G'$, the result is also a stabilizer operator, differing only by the $Z$ operators in the set $\mathcal{N}'(\inn) \cup \mathcal{N}'(\outt)$.

Let us now determine the post-measurement stabilizer operator within $G'$ by application of \cref{lem:lem1}. As prescribed by the operator $O$, every qubit $i\in L$ is measured in the basis of $o_i$, and the qubit $(\inn)$ is measured in the basis of  $A_\inn$. As a result, the post-measurement stabilizer operator is
\begin{equation}
    \left(\pm s_\inn \prod_{j\in J} s_j\right) B_\outt\negphantom{ii}\prod_{j' \in \mathcal{N}'_\inn \cup \mathcal{N}'_\outt}\negphantom{ii} Z_{j'} ,
\end{equation}
which can be taken to be a new generator associated with the vertex $\outt$, and brought to the canonical form of $X_\outt \prod_{j'\in\mathcal{N}'_\inn \cup \mathcal{N}'_\outt} Z_{j'}$ by local unitary basis change of the $\outt$-qubit. The properties of other stabilizers involving the qubits $\{\inn,\outt\}$ can be proven similarly. They merge to stabilizer operators of only the qubit remaining out-qubit.

Thus, after the measurement within the Bell-generating part $I\cup\{\inn\}$, the $\outt$-qubit behaves exactly like it was connected to the neighborhood $\mathcal{N}'_\inn$ as well: the two vertices are fused and the information between them is transferred. 
\end{proof}

\section{Post-measurement correlations and stabilizers 
consistent with the measurement scheme}

\label{app:postmeascorr}
\begin{lemma}
    Fix a graph state $\ket{G}$ of $N$ qubits and a certain measurement scheme defined in \cref{lem:lem1} by a set of qubits $I$ and operators $o_i$. The stabilizer operators  of $\ket{G}$ which are 
    consistent with the measurement pattern (that is, the operators share elements with the measurement scheme, see discussion before \cref{eq:postmeascorrstab}) form a subgroup. This subgroup has $N-\lvert I\rvert$ generators, and therefore fully defines the post-measurement state.
    \label{lem:postfuldef}
\end{lemma}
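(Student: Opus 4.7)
My plan is to reduce the claim to a dimension count in the symplectic $\mathbb{F}_2^{2N}$ representation of the Pauli group modulo phases, after first establishing the subgroup property.

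First, I would verify that the consistent stabilizers form a subgroup of $S$ by direct inspection. The identity is trivially consistent, and inverses exist within $S$ since Pauli strings are self-inverse up to sign. If $Q_1$ and $Q_2$ are both consistent, then the restriction of $Q_1 Q_2$ to each qubit $i \in I$ is a product of two Paulis drawn from $\{\mathbbm{1}, o_i\}$, which again lies in $\{\mathbbm{1}, o_i\}$ because $o_i^2 = \mathbbm{1}$. Hence consistency is preserved under multiplication.

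Next, I would recognize that the consistency condition $q_i \in \{\mathbbm{1}, o_i\}$ on every qubit $i \in I$ is equivalent to the commutator condition $[Q, o_i] = 0$, since the Paulis on a single qubit that commute with $o_i$ are precisely $\{\mathbbm{1}, o_i\}$. Passing to the $\mathbb{F}_2$-symplectic representation, the consistent subgroup is then $S' = S \cap W^{\perp_\omega}$, where $W$ is the $|I|$-dimensional subspace spanned by the independent single-qubit Paulis $\{o_i\}_{i \in I}$, and $W^{\perp_\omega}$ (of dimension $2N - |I|$) is its symplectic complement under the form $\omega$ encoding commutation. Because $S$ is a Lagrangian subspace of dimension $N$ with $S^{\perp_\omega} = S$, the inclusion-exclusion formula together with the duality $(S + W^{\perp_\omega})^{\perp_\omega} = S \cap W$ yields $\dim S' = N - |I| + \dim(S \cap W)$.

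The main obstacle is handling a possibly nontrivial $S \cap W$, which arises precisely when some product $\prod_{i \in K} o_i$ with $K \subseteq I$ already stabilizes $\ket{G}$ and thereby enforces a deterministic parity on the outcomes. Such an element is supported entirely on $I$, so by \cref{eq:postmeascorrstab} it maps to a scalar fixed by the measurement outcomes and contributes no independent generator on the $(N - |I|)$ unmeasured qubits. Consequently, the natural map $Q \mapsto Q'$ from $S'$ into the Pauli group on $V \setminus I$ has image of rank exactly $N - |I|$, matching the stabilizer rank of a pure $(N - |I|)$-qubit state. This produces the $N - |I|$ independent generators asserted by the Lemma and shows that they fully determine the post-measurement state $\ket{\psi}$. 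I expect the symplectic duality argument to be the main technical step, while the subgroup verification and the absorption of the $S \cap W$ redundancy into the outcome signs are essentially bookkeeping.
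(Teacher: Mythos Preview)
Your argument is correct and in fact more careful than the paper's. Both work in the $\mathbb{Z}_2^{2N}$ representation of the Pauli group, but where the paper imposes the consistency conditions directly as coordinate constraints ($z_i = 0$ for an $X_i$ measurement, $x_i = 0$ for $Z_i$, $x_i + z_i = 0$ for $Y_i$) on the binary-represented stabilizer and simply asserts that these $|I|$ constraints remain independent when restricted to the $N$-dimensional stabilizer subspace, you instead phrase consistency as commutation and invoke the Lagrangian property $S = S^{\perp_\omega}$ to obtain $\dim S' = N - |I| + \dim(S \cap W)$. Your treatment of the possible excess $\dim(S \cap W)$ is not a mere technicality: the paper's independence assertion can fail, e.g.\ for the square graph with $o_1 = X_1$, $o_3 = X_3$ the two constraints $z_1 = 0$ and $z_3 = 0$ coincide on the stabilizer subspace (both read $x_2 + x_4 = 0$), so the consistent subgroup there has three independent generators rather than $N - |I| = 2$. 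What you correctly establish is that the image of $S'$ under $Q \mapsto Q'$ on the unmeasured qubits nevertheless has rank exactly $N - |I|$, which is the statement actually needed to conclude that the post-measurement state is fully determined. Your symplectic route thus closes a gap in the paper's dimension count while arriving at the intended conclusion.
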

\begin{proof}
The stabilizer of $\ket{G}$, up to the structure of the signs, can be viewed as a linear subspace of $\mathbb{Z}_2^{2N}$ -- this is known as the \emph{binary representation} of the stabilizer \cite{hein2006entanglement}. The identification is following: if the stabilizer operator is defined as $\prod_{j\in J} g_j$, the corresponding element of $\mathbb{Z}_2^{2N}$ is $(x_1, \ldots, x_N) \oplus (z_1, \ldots, z_N)$, with $x_i$ being 1 if and only if $i\in J$ and $\vec z = A_G \vec x \mod 2$, where $A_G$ is the adjacency matrix of $G$. Thus, if a stabilizer operator contains $X_i$, the corresponding numbers are $(x_i, z_i)=(1,0)$, for $Z_i$ it is $(x_i, z_i)=(0,1)$, and $Y_i$ corresponds to $(x_i, z_i)=(1,1)$; no operator at site $i$ is denoted by $(x_i, z_i)=(0,0)$. In this way, the additive algebra of $\mathbb{Z}_2^2$ mimics the product rules of Pauli operators with only the sign 
missing.

The stabilizer of $\ket{G}$ corresponds to a special $N$-dimensional subspace in $\mathbb{Z}_2^{2N}$ \cite{hein2006entanglement}. For any stabilizer operator $Q=\prod_{j\in J} q_j$ to be 
consistent with the measurement scheme $O=\prod_{i\in I} o_i$ in the sense of \cref{lem:lem1}, each operator of $q_i$ its Pauli string form with support in $I$ must be $o_i$: this introduces constraints to the set of 
consistent operators. These constraints translated into the language of $\mathbb{Z}_2^{2N}$ are linear: if $X_i$ is measured at site $i$, the additional constraint is $z_i=0$ (this encompasses the cases of $X_i$ and identity at site $i$), if $Z_i$ is measured, the constraint is $x_i=0$, and for $Y_i$ it is $x_i+z_i=0 \mod 2$. All of them are independent, and there are $\lvert I \rvert$ of them, one for each measured site. Thus, the linear subspace of $\mathbb{Z}_2^{2N}$ corresponding to measurement-consistent stabilizers has dimension $N-\lvert I\rvert$; any basis of this subspace corresponds to generators of the post-measurement state stabilizer.
\end{proof}
For example, the correlations that stem from the measurement described after \cref{lem:lem1} and in \cref{fig:fiveqbpath} correspond to the following $\vec x \oplus \vec z$ vectors in the binary representation:
\begin{equation-aligned}
    (0, 1, 0, 1)&\oplus(1, 0, 0, 0)& {\color{gray}\approx}&~~{\color{gray} Z_1 X_2 X_4,}\\
    (1, 0, 1, 0)&\oplus(0, 0, 0, 1)& 
    {\color{gray}\approx}&~~{\color{gray} X_1 X_3 Z_4,}\\ 
    (1,  1, 1, 1)&\oplus(1, 0,  0, 1)& {\color{gray}\approx}&{\color{gray} -}{\color{gray} Y_1 X_2 X_3 Y_4.}
\end{equation-aligned}
The third vector is a linear combination of the first two; thus, the space of end correlations is two-dimensional.

\subsection{Discussion of measurement outcomes}
\label{app:outcomes}

\begin{figure}
\centering
\includegraphics[width=.8\linewidth]{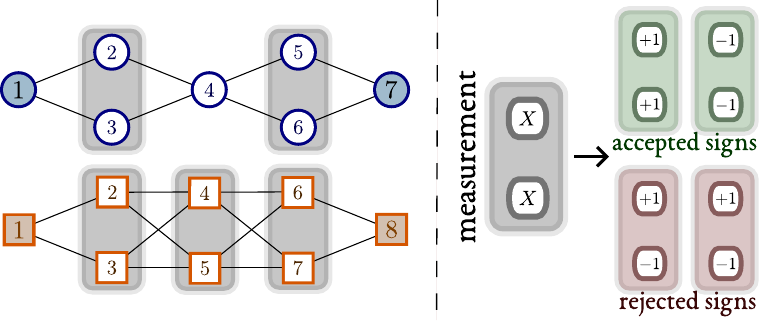}
\caption{Some measurement patterns have a restricted set of accepted outcomes. If we measure the pairs of qubits associated with the vertices in the gray areas in the Pauli $X$ basis, the products of the outcomes have to be equal to one. This indicates which combinations of signs are allowed: the accepted signs are shown in the green areas, while the rejected signs are shown in the red areas. We postselect by discarding all states whose measurement outcomes have signs that disagree with the target graph state.}
\label{fig:acc-signs}
\end{figure}

In the exemplary cases of \cref{sec:noise_extract}, we measure qubits of graph states in Pauli measurement bases which agree with the states' stabilizer operators. If the stabilizer operator $Q$ is fully embedded in the measurement structure, the product of the measurement outcome is known (see \cref{cor:obsstruct}). This reduces the allowed combinations of single measurement outcomes by one-half. For example, the twisted graph  $G$ in \cref{fig:acc-signs} is stabilized by $Q = g_2 g_3 = X_2 X_3$. We know, that $\expval{Q}{G} = s_2 s_3 = 1$, where $s_2, s_3$ are the single measurement outcomes.
It follows that the outcomes have to be equal, that is $s_2 = s_3$. If it nevertheless happens that we measure the two qubits of the state in the bases  $X_2 X_3$ and receive outcomes which are different, e.g. $(s_2, s_3)=(1, -1)$, it indicates that we measured a different state (assuming the measurement itself is perfect).  The same argument can be made for all gray regions in \cref{fig:acc-signs}.

In \cref{sec:noise_extract,sec:GHZ}, we perform measurements on noisy graph states and postselect on measurement outcomes. That is, we consider all stabilizer operators with support within the measurement pattern and check the outcome for consistency. If the outcomes do not match the expectation, we discard the resulting state.
This concept is also pictured in \cref{fig:fourqubit}.

\subsection{Patterns for specific graph families} \label{app:patterns}

Consider a path graph $G$ of $n+2$ vertices $V=\{0,1,\ldots,n,n+1\}$ with edges $E=\{\{k,k+1\} \}_{k=0}^n$. If the qubits $\{1,\ldots,n\}$ of the associated state $\ket{G}$ are measured in the basis of $X$, this induces correlations in the remaining unmeasured qubits $0$ and $n+1$. This follows from the fact that the following stabilizer operators are consistent with the measurement pattern:
\begin{equation-aligned}
    Z_0 Z_{n+1} \prod_{k\in I} X_k,& ~~~X_0 X_{n+1} \prod_{k\in I'} X_k,& (\text{odd }n)\\
    Z_0 X_{n+1} \prod_{k\in I} X_k,& ~~~X_0 Z_{n+1} \prod_{k\in I'} X_k.& (\text{even }n)\\
    \label{eqn:stabextended}
\end{equation-aligned}
where $I$ ($I'$) is the set of odd (even) numbers between $1$ and $n$:
\begin{equation-aligned}
    I &= \{ k \in \{1,2,\ldots,n\} : k=1 \mod 2\},\\
    I' &= \{ k \in \{1,2,\ldots,n\} : k=0 \mod 2\}.
\end{equation-aligned}

Thus, if the outcome of the $X_k$ measurement is denoted by $s_k$, the following operators generate the post-measurement stabilizer of the terminal qubits $\{0,n+1\}$:
\begin{equation-aligned}
      Z_0 Z_{n+1} \prod_{k\in I} s_k,& ~X_0 X_{n+1} \prod_{k\in I'} s_k,& (\text{odd }n)\\
    Z_0 X_{n+1} \prod_{k\in I} s_k,& ~X_0 Z_{n+1} \prod_{k\in I'} s_k.& (\text{even }n)\\
    \label{eqn:termoddpathcorr}
\end{equation-aligned}
Both cases are consistent with a maximally entangled two-qubit state being produced in the terminal qubits: for the odd-$n$ case, the existence of stabilizers described by  \cref{eqn:termoddpathcorr} ensures that further measurement of $X$, $Y$ or $Z$ at qubit $0$ is perfectly (anti)correlated with the outcome of measurement of the same operator at qubit $n+1$.

This result generalizes to the case where each qubit $k$ is replaced by a set of $m_k$ qubits in such a way that the adjacent layers are fully connected. In this case, the qubits are denoted by pairs of numbers $(k,i)$, where $k=\{0,1,\ldots,n,n+1\}$ is the layer index and $i\in\{1,\ldots,m_k\}$ is the qubit index within the layer. The edges exist between any layer-adjacent qubit: 
\begin{equation}
E=\{\{(k,i),(k+1,j)\}:k\in\{0,\ldots,n\},i\in\{1,\ldots,m_k\},j\in\{1,\ldots,m_{k+1}\}\}.
\end{equation}
This class of graphs includes the path graph (each layer has only a single qubit), the crazy graph (the layers $0$ and $n+1$ contain one qubit each, every other has two), and twisted graphs (the qubit count alternates between 1 and 2) as well. Suppose that all the qubits in every internal ($i=1, \ldots, n$) layer are measured in the $X$ basis. The stabilizer operators that determine the remaining correlations must be consistent with this structure. A particularly simple set of such operators exists: first, observe that if a layer $k$ contains at least two qubits $i$ and $i'$, then $X_{k,i}X_{k,i'}$ is a stabilizer operator. Thus, the signs of outcomes have to be equal within a layer. Therefore, for any choice of inner-layer qubit index $(i_k)$ the direct analogue of \cref{eqn:stabextended} holds:
\begin{equation-aligned}
    Z_0 Z_{n+1} \prod_{k\in I} X_{k,i_k},& ~X_0 X_{n+1} \prod_{k\in I'} X_{k,i_k},& (\text{odd }n),\\
    Z_0 X_{n+1} \prod_{k\in I} X_{k,i_k},& ~X_0 Z_{n+1} \prod_{k\in I'} X_{k,i_k},& (\text{even }n),\\
    \label{eqn:stabextendedgeneral}
\end{equation-aligned}
are stabilizer operators.
For symmetry, we assume the twisted graph structure exists only for an odd number of layers, so that the terminal ones consist only of one qubit each. Thus, the terminal layers $0$ and $n+1$ always consist only of one qubit each, and the inner indices are omitted.
Consequently, if $s_k$ in \cref{eqn:termoddpathcorr} are interpreted as signs of outcomes at layer $k$ (which are all equal within a layer due to the structure of the stabilizer operators mentioned previously), the same post-measurement stabilizer structure appears. 

In the case of the crazy graph, an additional observation can be made: loss of any single internal (not adjacent to the terminal qubits) edge yields the same post-measurement stabilizer structure, since the indices $(i_k)$ in \cref{eqn:stabextendedgeneral} can be chosen arbitrarily. Thus, the crazy graph family is especially tolerant to single-edge loss: it does not affect the post-measurement state, if the outcomes are postselected on the ideal sign structure. This can be generalized to the loss of any subset of edges not adjacent to each other and the terminal qubits.

\section{Derivation for the noise models} \label{app:noise}

\subsection{Uncorrelated phase noise} \label{app:uncorr_phase_noise}

\begin{figure}\centering \includegraphics[width=.4\linewidth]{./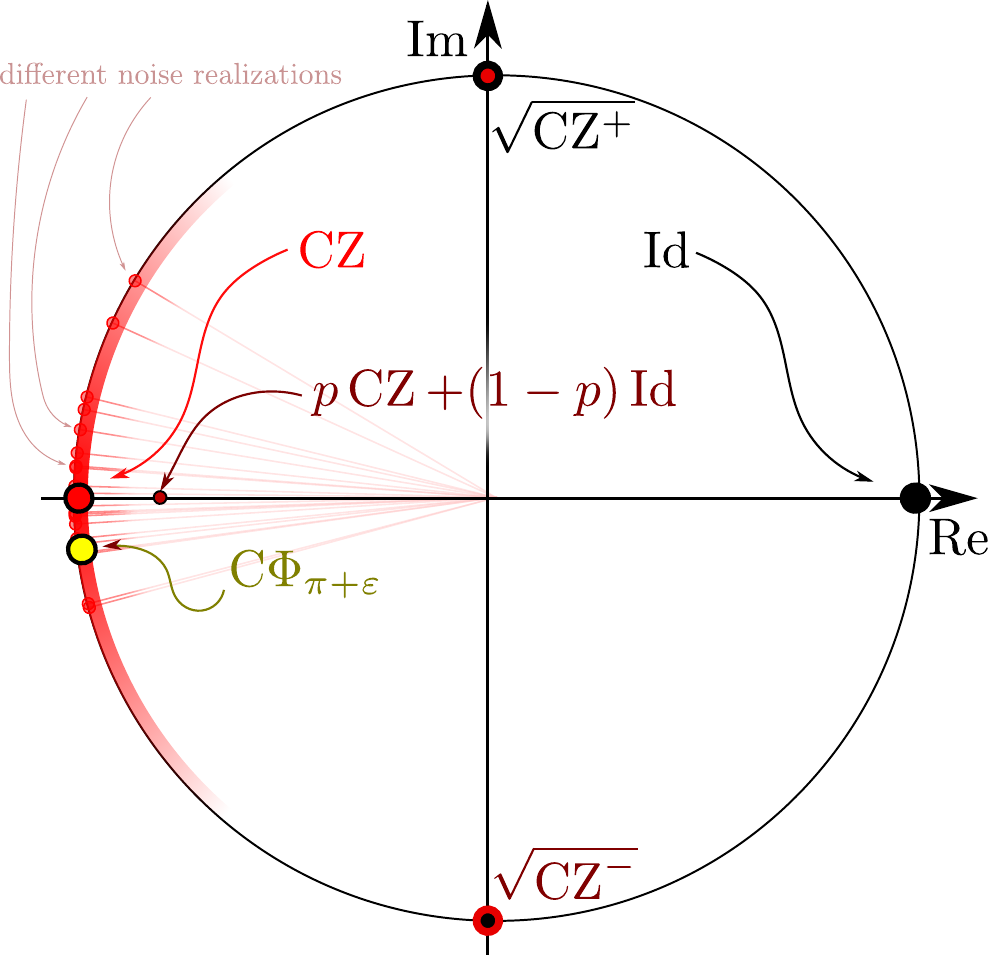}
\caption{The controlled-$Z$ gate, if implemented as a controlled phase operation, is susceptible to phase variations. In each experimental run the phase is constant, and the resulting state is a weighted graph state. Since this phase is unknown, the effective quantum state is a weighted graph state, averaged over different noise realizations. For just a single edge, if the phase distribution is centered around $\pi$ (corresponding to the ideal graph state), the resulting quantum channel is a mixture of controlled-$Z$ and identity: the edge is probabilistically generated. }
\label{fig:cz}
\end{figure}

In certain ion traps and superconducting quantum devices, the sequence of $\CZ$ gates is realized with evolution with an Ising-type Hamiltonian \cite{Stehlik2021,long2021universal,wunderlich2009two,piltz2014trapped} :
\begin{equation}
    H=\sum_{(i,j)\in G} \gamma_{i,j} \frac{Z_i+1}{2}\frac{Z_j+1}{2}.
    \label{eq:isingham_app}
\end{equation}
For constant $\gamma=\gamma_{i,j}$, the unitary operation $\exp(-i H t)$ exactly corresponds to the product of $\CZ$ gates appearing in  \cref{eq:grstate} for $t=\frac{\pi}{\gamma}$. However, incomplete control over the system leads to a fluctuation of the interaction strengths $\gamma_{i,j}$. In such a case, each of the two-body terms appearing in \cref{eq:isingham_app} gives rise to a controlled phase operation $\CP_{i,j} = \diag (1,1,1,\exp(-it\gamma_{i,j}))$ in the appropriate two-body subsystem. The quantum channel corresponding to this gate decomposes:
\begin{align}
    \CP \rho \CP^\dagger= \frac{1-\cos\phi}2 \Cz[\rho] + \frac{1+\cos\phi}{2}\rho+\frac{\sin\phi}2\CS^+[\rho] -\frac{\sin\phi}2 \CS^-[\rho],
\label{eq:cpdecomp}
\end{align}
where $\phi=t\gamma_e $ and $\CS^\pm$ is defined in  \cref{eq:cschan}. If the actually realized phase cannot be measured classically in each experimental run, the resulting state is a mixture of states:
\begin{equation}
    \rho = \int P(\{\gamma_{i,j}\}) \ket{G_\gamma} \bra{G_\gamma} \dd \gamma,
    \label{eq:defmix_app}
    \end{equation}
where $P(\{\gamma_{e}\})$ denotes the probability distribution of phases and $\ket{G_\gamma}=\left(\prod_{i,j} \CP_{i,j}\right) \ket{+}^{\otimes N}$. If the strength noise is independent for each edge (i.e. $P=\prod P_{e}$) and symmetric around $\pi$, a significantly easier description of the state $\rho$ can be found: each controlled-phase averages to a mixture of controlled-$Z$ and identity (see  \cref{fig:cz}), since these are the only terms appearing in  \cref{eq:cpdecomp} with prefactors symmetric around $\pi$. Thus, noise effectively leads to the generation of \emph{randomized graph states}: this is an ensemble of graph states, where the edge $e\in E$ is missing with probability $p_{e}$:
\begin{equation}
    \rho = \sum_{E' \subset E} \left(\prod_{{\{i',j'\} \in E' }}\!\!\! (1-p_{i',j'})\!\!\! \prod_{\{i,j\}\notin E'}\!\!\! p_{i,j}\right)\ket{G'}\bra{G'},
    \label{eq:randomgraph_app}
\end{equation}
where the graph $G'$ has the edges defined by the set $E'$. The probability $p_{i,j}$ can be determined by the statistical properties of the distribution of the phase $\phi$ associated with the edge $\{i,j\}$. In particular, for the normal distribution of $\varepsilon$ centered around $\pi$  with the standard deviation of $\sigma$ the probability $p$ takes the closed form of 
\begin{equation-aligned}
    p = \frac{1-\exp\left(-\frac{\sigma^2}2\right)}2=\frac{\sigma^2}4-\frac{\sigma^4}{16}+O(\sigma^6).
    \label{eq:pasphase}
\end{equation-aligned}

\subsection{Correlated phase noise} \label{app:corr_phase_noise}
Correlated phase noise has a similar, but more involved effect. Suppose that the weights in Hamiltonian \cref{eq:isingham_app} are all the same, but the interaction time varies. In such a case, the phase is the same for each controlled-phase operator, and the decomposition used to derive \cref{eq:randomgraph} breaks. This is due to the fact that in the integrand of \cref{eq:defmix_app} there exist additional terms symmetric around the ideal phase of $\phi=\pi$. There are the terms proportional to $\sin\phi$ in  \cref{eq:cpdecomp}: even powers of them also contribute to the final state. Thus, in the lowest order or approximation, the output state is described by:
\begin{itemize}
    \item The unmodified state graph state $\ket{G}$, with reduced probability coming from $\Cz$ terms appearing in product of channels defined by  \cref{eq:cpdecomp},
    \item graph states with one edge missing, as in the case of uncorrelated noise,
    \item graph states modified by a product of two $\CS^\pm$ channels corresponding to different edges, with possible negative weights coming from negative signs in  \cref{eq:cpdecomp}.
\end{itemize}
In the lowest order of approximation, corresponding to the $\sigma^2$ terms in \cref{eq:pasphase}, the calculation yields the final state (after averaging) of the form
\begin{equation-aligned}
\rho =& (1-p\lvert E\rvert)\ket{G}\bra{G} + p\sum_{e\in E}\lvert{\!\!\overbrace{\!G'\!}^{G\setminus\{e\}}\!\!}\rangle \bra{G'}\\
&+\frac{p}{2}\sum_{(e,e'\in E}^{e\neq e'} \sum_{s_1, s_2 = \pm1} s_1 s_2 \CS_e^{s_1}  \CS_{e'}^{s_2} [\ket{G}\bra{G}] .
\end{equation-aligned}
Here, $\lvert E\rvert$ denotes the number of edges in the graph $G$. Note that this decomposition is not convex and cannot be interpreted as an ensemble: the minus signs in front of the decomposition terms prevent this. Signs appear as a result of the discrete decomposition of the controlled-phase and subsequent averaging; see \cref{app:noise} for details. In this decomposition, $G-e \coloneqq (V,E\setminus \{e\})$ is the graph with edge $e$ removed, and ${\CS^\pm}$ is the application of one of the two unitary square roots of $\CZ$:
\begin{equation}
    {\CS^\pm} [\rho] = \sqrt{\CZ^\pm} \rho \sqrt{\CZ^\pm}^\dagger,
    \label{eq:cschan}
\end{equation}
where
\begin{align}
    \sqrt{\CZ^\pm} \coloneqq \frac{1}{2} \left[ \left( 1 + e^{\pm i \frac{\pi}{2}} \right) \1 + \left( 1 - e^{\pm i \frac{\pi}{2}} \right) \CZ \right] = \begin{pmatrix} 1 &&&\\&1&&\\&&1&\\&&&\pm i\end{pmatrix}.
    \label{eq:sqrtczop}
\end{align}

This decomposition is not unique, and in principle, different weights for the full, edge-missing, and $\CS$-modified graph states are possible; however, some of the signs will always be negative, regardless of the choice. All such decompositions are only low noise approximations for $p\ll \lvert E\rvert^{-1}$.
\subsection{Fusion gate noise}
\begin{figure}
\centering\includegraphics[width=.6\linewidth]{./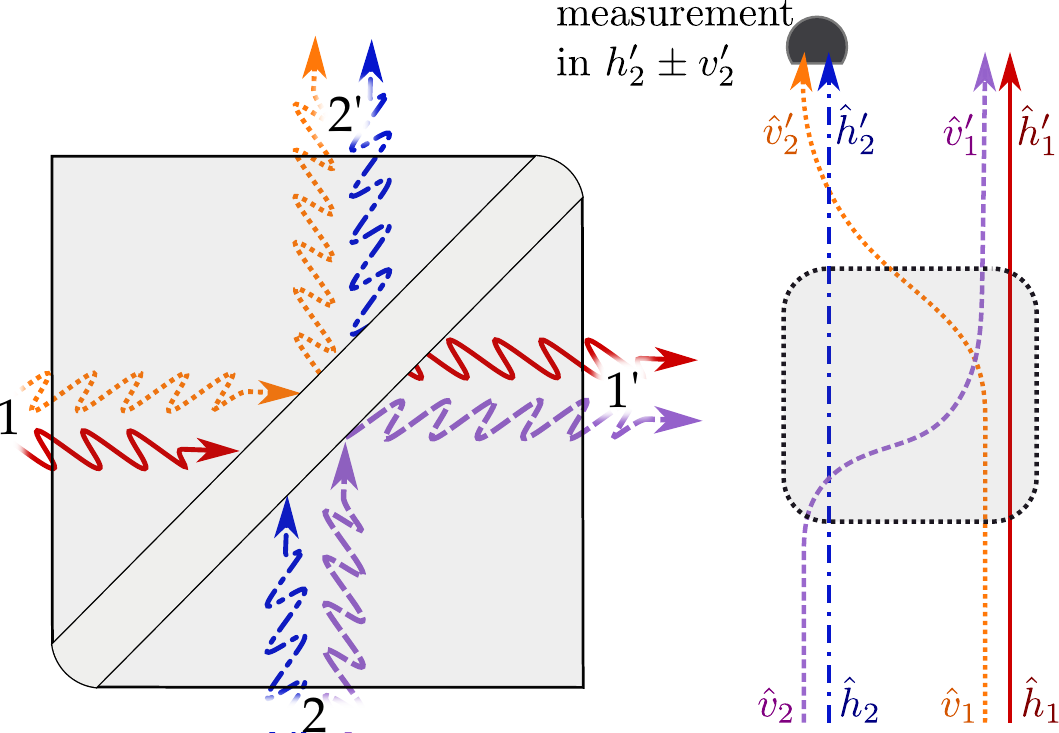}
\caption{Polarizing beamsplitter transmits horizontally polarized photons and reflects vertically polarized ones, entangling the input state in the output modes. Subsequent polarization measurement in the diagonal basis of one of the output modes is a basis for a type 1 fusion gate, which is used in the generation of larger graph states from Bell pairs. }
\label{fig:polbeamsplit}
\end{figure}
\newcommand{\bellp}{{\ensuremath\bullet\!\!-\!\!\bullet}}

A different scheme for generating large cluster states is employed in linear optics experiments \cite{Lu2007}. In such systems, the qubits are encoded as photons in different light modes. Two modes are associated with each qubit, and a single photon being in one of them corresponds to the standard computational basis states. Then, the graph states are built by qubit fusion of initially generated Bell pairs. If a type 1 entangling gate is used for this, imperfect mode matching leads to the affected qubits be effectively depolarized in the $Z$ basis. This has the exact same result on the output state as local (single-qubit) phase noise, despite different physical origins.

Usually, the two modes defining qubit degrees of freedom are polarization (horizontal/vertical) modes with overlapping spatial profiles. Thus, let us denote the creation operators of those modes by $\hhd_i$ and $\hvd_i$.
In this context, a Bell pair encoded as the nontrivial two-vertex graph state can be written as
\begin{equation}
    \ket\bellp=\frac12(1,1,1,-1)^T = \frac{\hhd_{1} \hhd_{2}+\hvd_{1} \hhd_{2}+\hhd_{1} \hvd_{2}-\hvd_{ 1} \hvd_{2}}{2} \ket\emptyset,
\end{equation}
where $\ket\emptyset$ denotes the vacuum state of the optical field. 
For the creation of a larger graph states, $N$ such  Bell pairs are generated through spontaneous parametric downconversion, with the initial state of
\begin{equation}
     \ket{\bellp}^{\otimes N}=\prod_{i=1}^{N}\frac{\hhd_{2i-1} \hhd_{2i}+\hvd_{2i-1} \hhd_{2i}+\hhd_{2i-1} \hvd_{2i}-\hvd_{2i-1} \hvd_{2i}}{2}\ket\emptyset.
\end{equation}

The Bell pairs are subsequently fused to create a larger graph state. Every fusion takes two spatial modes (indexed by numbers in the subscription) and mixes them through a polarizing beamsplitter -- see \cref{fig:polbeamsplit} for the details. Then, one of the output modes is measured in the diagonal $h\pm v$ polarization basis. To see the effect of such an operation, consider an arbitrary input graph state (not necessarily a tensor product of Bell pairs) for graph $G=(V,E)$ such that the vertices 1 and 2 are not connected:
\begin{equation-aligned}
    \ket G&=\prod_{\{1,j\}\in E}\CZ_{1,i} \prod_{\{2,j\}\in E} \CZ_{2,j} \ket{+}\otimes\ket+\otimes\ket{G'}\\
    &=\frac{\hhd_1 \hhd_2 \hat f_{++} +\hvd_1 \hhd_2 \hat f_{-+} +\hhd_1 \hvd_2 \hat f_{+-} +\hvd_1 \hvd_2 \hat f_{--} }{2}\ket\emptyset,
\end{equation-aligned}
where the symbols $\hat f$ denote expressions involving creation operators to arrive at the state $\ket\psi$. The $\hat f_{++}$ prepares the graph state $\ket{G'}$, while $\hat f_{-+}$ creates the state $\prod_{(1,i)\in G} z_i\ket{G'}$: it is the postselected graph state upon measuring negative sign of the first qubit in the $Z$ basis. The other two operators work similarly; however, $\hat f_{--}$ flips the phase of $\ket{G'}$ only in the qubits that are connected to only $1$ or $2$.

Polarizing beamsplitter, described mathematically by a unitary operator $U$, exchanges the vertical polarization, keeping the horizontal ones intact ($U \hhd_{i} = \hhd_{i} U$, $U \hvd_{1} = \hvd_{2} U$, and $U \hvd_{2} = \hvd_{1} U$). First part of the fusion process leads to:
\begin{align}
    U\ket G&=\frac{\hhd_1 \hhd_2 \hat f_{++} +\hvd_2 \hhd_2 \hat f_{-+} +\hhd_1 \hvd_1 \hat f_{+-} +\hvd_1 \hvd_2 \hat f_{--} }{2}\ket\emptyset.
\end{align}
The spatial mode 2 is subsequently measured in the basis of $(\hhd_2\pm\hvd_2)\ket\emptyset$, corresponding to the $\ket\pm$ of the logical qubit. If only a single $h+v$ photon is detected, the postselected output state reads
\begin{align}
    \frac{\hhd_1 \hat f_{++} +\hvd_1 \hat f_{--} }{\sqrt2}\ket\emptyset.
\end{align}
This is exactly the state corresponding to a graph $G''=(V'',E'')$ for which vertices 1 and 2 were merged with duplicate edges removed: $E''=E'\cup \{\{1,i\} : \{1,i\} \in E \veebar \{2,i\} \in E\}$.

This is the case only if the photons in modes 1 and 2 are indistinguishable after mixing through the polarizing beamsplitter -- the procedure relies on the Hong-Ou-Mandel effect. Should this assumption not be met, the interference required for the fusion to work does not take place, and effectively the two photons originating in modes 1 and 2 are measured independently. If only one photon was observed in the new mode 2, half of the time it arrived from the mode (1,v), while the photon from mode 2 is now localized in the new mode (1,v) and the final state is $\hvd_1 \hat f_{--} \ket\emptyset$. The opposite process means that the resulting state $\hhd_1 \hat f_{++} \ket\emptyset$. 

This effectively creates an ensemble which allows for the following interpretation: the fusion took place as if the photons were indistinguishable, but it was immediately measured in the $Z_1$ basis and the result of the measurement is not known. This is equivalent to a simple probabilistic $Z_1$ flip, described by the following channel:
\begin{align}
\rho \mapsto \overbrace{\Pi_{Z_1+}}^{\frac12(1+Z_1)} \rho \Pi_{Z+}  + \overbrace{\Pi_{Z_1-}}^{\frac12({1-Z_1})} \rho \Pi_{Z_1-} = \frac{\rho + Z_1 \rho Z_1}{2}.
\end{align}
The $Z$-flip ensemble is equivalent to the $Z$-measurement one and is easier to work with. It is easier to observe that the noise does not propagate if the affected qubit is then fused with others, since there is always just a (potentially $Z$-flipped) graph state at the input -- unless it also fails because the noise is correlated.

\section{Cancellation of correlated edge noise effects in twist and crazy graph families}
\label{app:corredgecancel}

If for the square graph presented in  \cref{fig:fourqubit}a,  instead of the perfect controlled-$Z$ unitary, the controlled phase \begin{equation-aligned}
    \CP(\pi+\varepsilon)=\diag(1,1,1,-1\exp(i\varepsilon))
\end{equation-aligned} is used {with the same $\varepsilon$ for all edges, the result is a four-qubit \emph{weighted graph state} which we denote by} $\ket{G}_\varepsilon$. If qubits 1 and 3 are measured in the $X$ basis, {we postselect on the equal measurement outcomes.}  With the weighted graph state $\ket{G}_{\varepsilon}$ as the initial state, let us denote the unnormalized post-measurement state by
\begin{align}
    \ket{\xi}_\pm = \frac{\mathbbm{1}\pm X_1}2\frac{\mathbbm{1}\pm X_3}2\ket{G}_\varepsilon.
\end{align}

Note the equality of signs $\pm$ appearing in two projection operators: we postselect on observing this type of outcome, since unequal signs can not appear in the ideal ($\varepsilon=0$) case. The relevant expectation values defining the end correlations (with $\langle O \rangle_\pm$ denoting the expectation value of $O$ for the unnormalized state $\ket{\xi}_\pm$) across qubits $2$ and $4$ are the following, in the lowest nontrivial noise order:
\begin{align}
    \langle Z_2 Z_4\rangle_+ &= 1-\varepsilon^2 +O(\varepsilon^4), & \langle Z_2 Z_4\rangle_- &= -1 +\frac{\varepsilon^2}{2} +O(\varepsilon^4), \notag \\
    \langle X_2 X_4 \rangle_+ &= 1-3\varepsilon^2+O(\varepsilon^4), & \langle X_2 X_4 \rangle_- &= -1 +\frac{\varepsilon^2}{2}+O(\varepsilon^4), \notag \\
    \langle Y_2 Y_4 \rangle_+ &= -1+3\varepsilon^2+O(\varepsilon^4), & \langle Y_2 Y_4 \rangle_- &= 1 -\frac{\varepsilon^2}{2}+O(\varepsilon^4), \notag \\
    \| \vert \xi\rangle_+ \|^2 &= 1-\varepsilon^2+O(\varepsilon^4), & \| \vert \xi\rangle_- \|^2 &= 1 -\frac{\varepsilon^2}{2}+O(\varepsilon^4). 
\end{align}
Note that for the $\ket{\xi}_-$ outcome, the squared length $\| \vert \xi\rangle_- \|^2$ is proportional to the unnormalized expectation values: thus, for the physical correlations the dominant noise effects are canceled and only higher order terms ($O(\varepsilon^4)$) remain. On the other hand, if the outcome is $\ket{\xi}_+$, the noise is amplified compared to postselection on only stabilizer-consistent outcomes.  Thus, further postselection can amplify the entanglement quality of the remaining qubits at the cost of discarding certain outcomes and reducing production rates.

The double twist graph can be viewed as multiple such squares stacked together by the terminal vertices: thus, a proper postselection can mitigate the first nontrivial noise effects completely (see \cref{fig:corredgesus}). A similar structure appears for the crazy graph: it is much more complex to analyze, but computer algebra systems provide evidence of a periodic structure (also visible in \cref{fig:corredgesus})   of susceptibilities as the length of the graph increases.

\bibliographystyle{quantum}
\bibliography{references}

\end{document}